\documentclass[11pt,leqno]{article}

\input{style}
\usepackage{enumitem}

\numberwithin{equation}{section}

\begin{document}

\title{Muckenhoupt's $\Ap{p}$ condition and the existence of the
  optimal martingale measure}

\author{Dmitry Kramkov\thanks{The author also holds a part-time
    position at the University of Oxford. This research was supported
    in part by the Oxford-Man Institute for Quantitative Finance at
    the University of Oxford.}  \;and Kim Weston,\\
  Carnegie Mellon University, \\
  Department of Mathematical Sciences,\\
  5000 Forbes Avenue, Pittsburgh, PA, 15213-3890, US}

\date{\today}

\maketitle
\begin{abstract}
  In the problem of optimal investment with utility function defined
  on $(0,\infty)$, we formulate sufficient conditions for the dual
  optimizer to be a uniformly integrable martingale.  Our key
  requirement consists of the existence of a martingale measure whose
  density process satisfies the probabilistic Muckenhoupt $\Ap{p}$
  condition for the power $p=1/(1-a)$, where $a\in (0,1)$ is a lower
  bound on the relative risk-aversion of the utility function. We
  construct a counterexample showing that this $\Ap{p}$ condition is
  sharp.
\end{abstract}

\begin{description}
\item[Keywords:] utility maximization, optimal martingale measure, BMO
  martingales, $\Ap{p}$ condition.
\item[AMS Subject Classification (2010):] 60G44, 91G10.
\end{description}

\section{Introduction}
\label{sec:introduction}

An unpleasant qualitative feature of the general theory of optimal
investment with a utility function defined on $(0,\infty)$ is that the
dual optimizer $\widehat Y$ may not be a uniformly integrable
martingale. In the presence of jumps, it may even fail to be a local
martingale. The corresponding counterexamples can be found in
\cite{KramSch:99}. In this paper, we seek to provide conditions under
which the uniform martingale property for $\widehat Y$ holds and thus,
$\widehat Y/\widehat Y_0$ defines the density process of the
\emph{optimal} martingale measure $\widehat{\mathbb{Q}}$.

The question of whether $\widehat{Y}$ is a uniformly integrable
martingale is of longstanding interest in mathematical finance and can
be traced back to \cite{HePears:91b} and
\cite{KaratLehocShrXu:91}. This problem naturally arises in situations
involving utility-based arguments. For instance, it is relevant for
pricing in incomplete markets, where according to
\cite{HugonKramSch:05} the existence of $\widehat{\mathbb{Q}}$ is
equivalent to the uniqueness of marginal utility-based prices for
every bounded contingent claim.

Our key requirement consists of the existence of a dual
supermartingale $Z$, which satisfies the probabilistic Muckenhoupt
$\Ap{p}$ condition for the power $p>1$ such that
\begin{equation}
  \label{eq:1}
  p=\frac1{1-a}.
\end{equation}
Here $a\in (0,1)$ is a lower bound on the relative risk-aversion of
the utility function.  As we prove in Theorem~\ref{th:2}, this
condition, along with the existence of an upper bound for the relative
risk-aversion, yields $\Ap{p'}$ for $\widehat Y$ for some $p'>1$. This
property in turn implies that the dual minimizer $\widehat Y$ is of
class $(\mathbf{D})$, that is, the family of its values evaluated at
all stopping times is uniformly integrable.  In
Proposition~\ref{prop:3}, we construct a counterexample showing that
the bound~\eqref{eq:1} is the best possible for $\widehat Y$ to be of
class $(\mathbf{D})$ even in the case of power utilities and
continuous stock prices.

A similar idea of passing regularity from some dual element to the
optimal one has been employed in \cite{DMSSS:97}, \cite{GranKraw:99}
and \cite{DGRSSS:02} for respectively, quadratic, power and
exponential utility functions defined on the whole real line. These
papers use appropriate versions of the Reverse H\"{o}lder $\Rq{q}$
inequality which is dual to $\Ap{p}$.  Note that contrary to $\Ap{p}$,
the uniform integrability property is not implied but rather
\emph{required} by $\Rq{q}$. While this requirement is not a problem
for real-line utilities, where the optimal martingale measures always
exist, it is clearly an issue for utility functions defined on
$(0,\infty)$.

Even if the dual minimizer $\widehat Y$ is of class $(\mathbf{D})$, it
may not be a martingale, due to the lack of the local martingale
property; see the single-period example for logarithmic utility in
\cite[Example~5.1$'$]{KramSch:99}. In Proposition~\ref{prop:2} we
prove that every \emph{maximal} dual supermartingale (in particular,
$\widehat Y$) is a local martingale if the ratio of any two positive
wealth processes is $\sigma$-bounded.

Our main results, Theorems~\ref{th:2} and \ref{th:3}, are stated in
Section~\ref{sec:exist-optimal-mart}.  They are accompanied by
Corollaries~\ref{cor:1} and~\ref{cor:2}, which exploit well known
connections between the $\Ap{p}$ condition and $\bmo$ martingales.

\section{Setup}
\label{sec:setup}

We use the same framework as in \cite{KramSch:99, KramSch:03} and
refer to these papers for more details. There is a financial market
with a bank account paying zero interest and $d$ stocks. The process
of stocks' prices $S = (S^i)$ is a semimartingale with values in
$\mathbf{R}^d$ on a filtered probability space $(\Omega, \mathcal{F},
(\mathcal{F}_t)_{t\in [0,T]}, \mathbb{P})$.  Here $T$ is a finite
maturity and $\mathcal{F} = \mathcal{F}_T$, but we remark that our
results also hold for the case of infinite maturity.

A (self-financing) portfolio is defined by an initial capital $x\in
\mathbf{R}$ and a predictable $S$-integrable process $H=(H^i)$ with
values in $\mathbf{R}^d$ of the number of stocks. Its corresponding
wealth process $X$ evolves as
\begin{displaymath}
  X_t = x + \int_{0} ^{t} H_u dS_u,\quad t\in [0,T].
\end{displaymath}

We denote by $\mathcal{X}$ the family of non-negative wealth
processes:
\begin{displaymath}
  \mathcal{X} \set  \descr{X\geq 0}{X \text{ is a wealth process}} 
\end{displaymath}
and by $\mathcal{Q}$ the family of equivalent local martingale
measures for $\mathcal{X}$:
\begin{displaymath}
  \mathcal{Q} \set \descr{\mathbb{Q}\sim \mathbb{P}}{\mtext{every}
    X\in \mathcal{X} \mtext{is a local
      martingale under} \mathbb{Q}}. 
\end{displaymath}
We assume that
\begin{equation}
  \label{eq:2}
  \mathcal{Q} \neq \emptyset, 
\end{equation}
which is equivalent to the absence of arbitrage;
see~\cite{DelbSch:94,DelbSch:98}.

There is an economic agent whose preferences over terminal wealth are
modeled by a utility function $U$ defined on $(0,\infty)$. We assume
that $U$ is of \emph{Inada type}, that is, it is strictly concave,
strictly increasing, continuously differentiable on $(0,\infty)$, and
\begin{displaymath}
  U'(0) = \lim_{x\to 0} U'(x) = \infty, \quad 
  U'(\infty) = \lim_{x\to \infty} U'(x) = 0. 
\end{displaymath}
For a given initial capital $x>0$, the goal of the agent is to
maximize the expected utility of terminal wealth.  The value function
of this problem is denoted by
\begin{equation}
  \label{eq:3}
  u(x) = \sup_{X\in \mathcal{X}, \; X_0 = x} \EP{U(X_T)}.
\end{equation}

Following \cite{KramSch:99}, we define the dual optimization problem
to~\eqref{eq:3} as
\begin{equation}
  \label{eq:4}
  v(y) = \inf_{Y\in{\cal Y}, Y_0=y}\EP{V(Y_T)},\quad y>0,  
\end{equation}
where $V$ is the convex conjugate to $U$:
\begin{displaymath}
  V(y)=\sup_{x>0} \left\{U(x)-xy\right\}, \quad y>0,
\end{displaymath}
and $\mathcal{Y}$ is the family of ``dual'' supermartingales to
$\mathcal{X}$:
\begin{displaymath}
  \mathcal{Y}=\descr{Y\geq 0}{XY \mtext{is a supermartingale for
      every} X\in\mathcal{X}}. 
\end{displaymath}
Note that the set $\mathcal{Y}$ contains the density processes of all
$\mathbb{Q}\in\mathcal{Q}$ and that, as $1\in \mathcal{X}$, every
element of $\mathcal{Y}$ is a supermartingale.

It is known, see \cite[Theorem~2]{KramSch:03}, that under~\eqref{eq:2}
and
\begin{equation}
  \label{eq:5}
  v(y) < \infty, \quad y>0, 
\end{equation}
the value functions $u$ and $-v$ are of Inada type, $v$ is the convex
conjugate to $u$, and
\begin{equation}
  \label{eq:6}
  v(y) = \inf_{\mathbb{Q}\in
    \mathcal{Q}}\EP{V\left(y\frac{d\mathbb{Q}}{d\mathbb{P}}\right)},\quad
  y>0. 
\end{equation}
The solutions $X(x)$ to \eqref{eq:3} and $Y(y)$ to \eqref{eq:4}
exist. If $y=u'(x)$ or, equivalently, $x=-v'(y)$, then
\begin{displaymath}
  U'(X_T(x)) = Y_T(y), 
\end{displaymath}
and the product $X(x)Y(y)$ is a uniformly integrable martingale.

The last two properties actually characterize optimal $X(x)$ and
$Y(y)$. For convenience of future references, we recall this
``verification'' result.

\begin{Lemma}
  \label{lem:1}
  Let $\widehat X\in \mathcal{X}$ and $\widehat Y\in \mathcal{Y}$ be
  such that
  \begin{displaymath}
    U'(\widehat{X}_T) = \widehat{Y}_T, \quad
    \EP{V(\widehat{Y}_T)}<\infty, \quad  \EP{\widehat
      X_T\widehat{Y}_T} = \widehat X_0 \widehat Y_0. 
  \end{displaymath}
  Then $\widehat X$ solves~\eqref{eq:3} for $x=\widehat X_0$ and
  $\widehat Y$ solves~\eqref{eq:6} for $y=\widehat Y_0$.
\end{Lemma}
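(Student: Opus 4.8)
The plan is to prove this verification lemma by exploiting the duality relationship between $U$ and $V$, which will let me sandwich the expected utilities and dual values in a way that forces optimality. The crucial algebraic fact is the Fenchel--Young inequality: for the convex conjugate $V(y)=\sup_{x>0}\{U(x)-xy\}$ we have $U(x)\leq V(y)+xy$ for all $x,y>0$, with equality precisely when $y=U'(x)$. The hypothesis $U'(\widehat X_T)=\widehat Y_T$ is exactly the equality case, so pointwise $U(\widehat X_T)=V(\widehat Y_T)+\widehat X_T\widehat Y_T$.

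First I would establish the primal optimality of $\widehat X$. Take an arbitrary competitor $X\in\mathcal{X}$ with $X_0=\widehat X_0$. Applying Fenchel--Young with the fixed dual variable $\widehat Y_T$ gives the pointwise bound $U(X_T)\leq V(\widehat Y_T)+X_T\widehat Y_T$, while the equality case gives $U(\widehat X_T)=V(\widehat Y_T)+\widehat X_T\widehat Y_T$. Subtracting and taking expectations, I would get
\begin{displaymath}
  \EP{U(X_T)}-\EP{U(\widehat X_T)}\leq \EP{X_T\widehat Y_T}-\EP{\widehat X_T\widehat Y_T}.
\end{displaymath}
Now $\EP{\widehat X_T\widehat Y_T}=\widehat X_0\widehat Y_0$ by hypothesis, and since $X\widehat Y$ is a supermartingale (because $\widehat Y\in\mathcal{Y}$ and $X\in\mathcal{X}$) we have $\EP{X_T\widehat Y_T}\leq X_0\widehat Y_0=\widehat X_0\widehat Y_0$. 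Hence the right-hand side is $\leq 0$, giving $\EP{U(X_T)}\leq\EP{U(\widehat X_T)}$, which is precisely optimality of $\widehat X$ for~\eqref{eq:3}.

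For the dual side I would argue symmetrically against an arbitrary $Y\in\mathcal{Y}$ with $Y_0=\widehat Y_0$. Here I use the conjugate inequality in the form $V(y)\geq U(x)-xy$, specialized to $x=\widehat X_T$: namely $V(Y_T)\geq U(\widehat X_T)-\widehat X_T Y_T$, with equality for $Y_T=\widehat Y_T=U'(\widehat X_T)$. Subtracting the equality case and taking expectations yields
\begin{displaymath}
  \EP{V(Y_T)}-\EP{V(\widehat Y_T)}\geq \EP{\widehat X_T\widehat Y_T}-\EP{\widehat X_T Y_T}.
\end{displaymath}
Again $\EP{\widehat X_T\widehat Y_T}=\widehat X_0\widehat Y_0=\widehat X_0 Y_0$, and the supermartingale property of $\widehat X Y$ gives $\EP{\widehat X_T Y_T}\leq\widehat X_0 Y_0$, so the right-hand side is $\geq 0$, establishing $\EP{V(Y_T)}\geq\EP{V(\widehat Y_T)}$ and thus dual optimality for~\eqref{eq:6}.

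The main subtlety to handle carefully is integrability, so that the expectations above are well defined and the subtractions are legitimate rather than producing indeterminate $\infty-\infty$ forms. The hypothesis $\EP{V(\widehat Y_T)}<\infty$ is included precisely to control this; combined with the finiteness of $\widehat X_0\widehat Y_0=\EP{\widehat X_T\widehat Y_T}$ and the identity $U(\widehat X_T)=V(\widehat Y_T)+\widehat X_T\widehat Y_T$, it forces $\EP{U(\widehat X_T)}<\infty$ as well. For competitors one should first dispatch the trivial case $\EP{U(X_T)}=-\infty$ (or the dual analogue) and otherwise note that the Fenchel--Young inequality bounds the relevant positive parts, so the chain of inequalities stays within the extended reals in the correct direction. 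Handling these integrability bookkeeping details, rather than the conjugacy algebra itself, is where the real work lies.
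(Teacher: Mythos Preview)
Your proposal is correct and follows essentially the same route as the paper: the Fenchel--Young identity $U(\widehat X_T)=V(\widehat Y_T)+\widehat X_T\widehat Y_T$, the two Fenchel--Young inequalities against competitors $X\in\mathcal{X}$ and $Y\in\mathcal{Y}$, and the supermartingale property of products $XY$. The paper presents this more tersely and does not spell out the integrability bookkeeping you mention, but the argument is the same.
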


\begin{proof}
  The result follows immediately from the identity
  \begin{displaymath}
    U(\widehat X_T) = V(\widehat Y_T) + \widehat X_T \widehat Y_T
  \end{displaymath}
  and the inequalities
  \begin{align*}
    U(X_T) & \leq V(\widehat Y_T) + X_T \widehat Y_T,
    \quad X\in \mathcal{X},  \\
    U(\widehat X_T) & \leq V(Y_T) + \widehat X_T Y_T, \quad Y\in
    \mathcal{Y},
  \end{align*}
  after we recall that $XY$ is a supermartingale for all $X\in
  \mathcal{X}$ and $Y\in \mathcal{Y}$.
\end{proof}

The goal of the paper is to find sufficient conditions for the lower
bound in~\eqref{eq:6} to be attained at some $\mathbb{Q}(y)\in
\mathcal{Q}$ called the \emph{optimal martingale measure} or,
equivalently, for the dual minimizer $Y(y)$ to be a \emph{uniformly
  integrable martingale}; in this case,
\begin{displaymath}
  Y_T(y) = y\frac{d\mathbb{Q}(y)}{d\mathbb{P}}.
\end{displaymath}
Our criteria are stated in Theorem~\ref{th:2} below, where a key role
is played by the probabilistic version of the classical Muckenhoupt
$\Ap{p}$ condition.

\section{$\Ap{p}$ condition for the dual minimizer}
\label{sec:ap-condition-dual}

Following~\cite[Section~2.3]{Kazam:94}, we recall the probabilistic
$\Ap{p}$ condition.

\begin{Definition}
  \label{def:1}
  Let $p>1$. An optional process $R\geq 0$ \emph{satisfies $\Ap{p}$}
  if $R_T>0$ and there is a constant $\C>0$ such that for every
  stopping time $\tau$
  \begin{displaymath}
    \cEP{\tau}{\left(\frac{R_\tau}{R_T}\right)^{\frac1{p-1}}} 
    \leq \C. 
  \end{displaymath}
\end{Definition}

Observe that if $R$ satisfies $\Ap{p}$, then $R$ satisfies $\Ap{p'}$
for every $p'\geq p$.

An important consequence of the $\Ap{p}$ condition is a uniform
integrability property. For continuous local martingales this fact is
well known and can be found e.g., in \cite[Section~2.3]{Kazam:94}.

\begin{Lemma}
  \label{lem:2}
  If an optional process $R\geq 0$ satisfies $\Ap{p}$ for some $p>1$
  and $\EP{R_T}<\infty$, then $R$ is of class $(\mathbf{D})$:
  \begin{displaymath}
    \descr{R_\tau}{\tau \text{ is a stopping time}} \text{ is
      uniformly integrable}. 
  \end{displaymath}
\end{Lemma}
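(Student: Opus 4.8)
The plan is to prove a single uniform tail estimate and then read off class $(\mathbf{D})$ from the integrability of $R_T$. Write $q\set 1/(p-1)>0$. Concretely, I aim to show that for a suitably small fixed $\epsilon\in(0,1)$ one has, for \emph{every} stopping time $\tau$ and every $\lambda>0$,
\begin{displaymath}
  \EP{R_\tau \ind{R_\tau>\lambda}}
  \leq \frac{1}{\epsilon\left(1-\C\epsilon^{q}\right)}\,\EP{R_T\ind{R_T>\epsilon\lambda}}.
\end{displaymath}
The right-hand side does not depend on $\tau$, and since $\EP{R_T}<\infty$ it tends to $0$ as $\lambda\to\infty$; hence $\sup_\tau\EP{R_\tau\ind{R_\tau>\lambda}}\to 0$, which is precisely the asserted uniform integrability.

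First I would fix $\epsilon\in(0,1)$ and a stopping time $\tau$, and rewrite the $\Ap{p}$ bound via the conditional Markov inequality: on $\braces{R_\tau>0}$,
\begin{align*}
  \cEP{\tau}{\ind{R_T\leq \epsilon R_\tau}}
  &= \cEP{\tau}{\ind{\left(\frac{R_\tau}{R_T}\right)^{q}\geq \epsilon^{-q}}} \\
  &\leq \epsilon^{q}\,\cEP{\tau}{\left(\frac{R_\tau}{R_T}\right)^{q}}
  \leq \C\epsilon^{q}.
\end{align*}
Next I would split $\ind{R_\tau>\lambda}=\ind{R_\tau>\lambda}\ind{R_T>\epsilon R_\tau}+\ind{R_\tau>\lambda}\ind{R_T\leq \epsilon R_\tau}$. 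On the first set $R_\tau<R_T/\epsilon$ and $R_T>\epsilon R_\tau>\epsilon\lambda$, so that contribution is at most $\frac{1}{\epsilon}\EP{R_T\ind{R_T>\epsilon\lambda}}$. On the second set I would condition on $\mathcal{F}_\tau$, pull out the $\mathcal{F}_\tau$-measurable factor $R_\tau\ind{R_\tau>\lambda}$, and apply the bound above to produce a factor $\C\epsilon^{q}$ in front of $\EP{R_\tau\ind{R_\tau>\lambda}}$ --- the very quantity being estimated.

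The main obstacle is the absorption step: since $R$ is only assumed optional (not a supermartingale), $R_\tau$ need not be integrable, and one cannot simply move $\C\epsilon^{q}\EP{R_\tau\ind{R_\tau>\lambda}}$ to the left of a possibly infinite inequality. I would circumvent this by carrying out the whole splitting on the truncated set $\braces{\lambda<R_\tau\leq N}\in\mathcal{F}_\tau$, where the expectation is finite (bounded by $N$); choosing $\epsilon$ once and for all so that $\C\epsilon^{q}<1$ then makes the absorption legitimate and gives the target bound with $\braces{R_\tau>\lambda}$ replaced by $\braces{\lambda<R_\tau\leq N}$. Letting $N\to\infty$ by monotone convergence recovers the displayed estimate for $\braces{R_\tau>\lambda}$; the same bound with $\lambda\downarrow 0$ also yields $\EP{R_\tau}\leq \frac{1}{\epsilon(1-\C\epsilon^{q})}\EP{R_T}<\infty$, so each $R_\tau$ is genuinely integrable. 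The remaining ingredients --- the conditional Markov inequality and the pointwise estimates on each event --- are routine.
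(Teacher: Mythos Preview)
Your argument is correct, but it takes a noticeably different route from the paper's proof. The paper obtains class $(\mathbf{D})$ in one stroke via Jensen's inequality: since $x\mapsto x^{-1/(p-1)}$ is convex on $(0,\infty)$, one has
\[
\C\ \geq\ \cEP{\tau}{\left(\frac{R_\tau}{R_T}\right)^{\frac1{p-1}}}
= R_\tau^{\frac1{p-1}}\,\cEP{\tau}{R_T^{-\frac1{p-1}}}
\geq R_\tau^{\frac1{p-1}}\left(\cEP{\tau}{R_T}\right)^{-\frac1{p-1}},
\]
whence $R_\tau\leq \C^{p-1}\cEP{\tau}{R_T}$, and domination by the uniformly integrable family of conditional expectations of $R_T$ finishes the proof. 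Your approach instead extracts from $\Ap{p}$ a conditional tail bound $\cEP{\tau}{\ind{R_T\leq\epsilon R_\tau}}\leq \C\epsilon^{q}$ via Markov, then runs a split-and-absorb argument (with the necessary truncation to $\{R_\tau\leq N\}$ so the absorption is legitimate) to obtain a uniform tail estimate for $R_\tau$ in terms of the tail of $R_T$. Both yield the result; the paper's version is shorter and gives the clean pointwise domination $R_\tau\leq \C^{p-1}\cEP{\tau}{R_T}$, while yours produces an explicit quantitative tail inequality and avoids appealing to convexity, at the cost of the extra truncation/limiting step.
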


\begin{proof}
  Let $\tau$ be a stopping time. As $p>1$, the function $x\mapsto
  x^{-\frac1{p-1}}$ is convex. Hence, by Jensen's inequality,
  \begin{displaymath}
    \cEP{\tau}{\left(\frac{R_\tau}{R_T}\right)^{\frac1{p-1}}}
    = R_\tau^{\frac1{p-1}}
    \cEP{\tau}{R_T^{-\frac1{p-1}}} \geq
    R_\tau^{\frac1{p-1}}
    \left(\cEP{\tau}{R_T}\right)^{-\frac1{p-1}}.
  \end{displaymath}
  Using the constant $\C>0$ from $\Ap{p}$, we obtain that
  \begin{displaymath}
    R_\tau \leq \C^{p-1} \cEP{\tau}{R_T},
  \end{displaymath}
  and the result follows.
\end{proof}

To motivate the use of the $\Ap{p}$ condition in the study of the dual
minimizers $Y(y)$, $y>0$, we first consider the case of power utility
with a positive power.

\begin{Proposition}
  \label{prop:1}
  Let~\eqref{eq:2} hold. Assume that
  \begin{displaymath}
    U(x) = \frac{x^{1-a}}{1-a}, \quad x>0,
  \end{displaymath}
  with the relative risk-aversion $a\in (0,1)$ and denote $p \set
  \frac1{1-a}>1$.  Then for $y>0$, the solution $Y(y)$ to the dual
  problem~\eqref{eq:4} exists if and only if
  \begin{equation}
    \label{eq:7}
    \EP{Y_T^{-\frac1{p-1}}} < \infty \mmtext{for some} Y\in
    \mathcal{Y}  
  \end{equation}
  and, in this case, for every $Y\in \mathcal{Y}$, $Y>0$ and every
  stopping time $\tau$,
  \begin{displaymath}
    \cEP{\tau}{\left(\frac{{Y}_\tau(y)}{{Y}_T(y)}\right)^{\frac1{p-1}}} 
    \leq \cEP{\tau}{\left(\frac{Y_\tau}{Y_T}\right)^{\frac1{p-1}}}.
  \end{displaymath}  
  In particular, $Y(y)$ satisfies $\Ap{p}$ if and only if there is
  $Y\in \mathcal{Y}$ satisfying $\Ap{p}$.
\end{Proposition}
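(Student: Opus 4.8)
The plan is to first make the dual objective explicit. A direct computation of the convex conjugate of $U(x)=x^{1-a}/(1-a)$ gives $V(y)=(p-1)y^{-1/(p-1)}$, so that, writing $r\set\frac1{p-1}$, the dual problem \eqref{eq:4} coincides, up to the positive constant $p-1$, with the minimization of $\EP{Y_T^{-r}}$ over $\descr{Y\in\mathcal{Y}}{Y_0=y}$. Since $\mathcal{Y}$ is a cone and $Y\mapsto\EP{Y_T^{-r}}$ is homogeneous of degree $-r$, the substitution $Y=y\bar Y$ yields $v(y)=v(1)y^{-r}$; hence \eqref{eq:7} is equivalent to $v(y)<\infty$, i.e.\ to \eqref{eq:5}. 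The existence of $Y(y)$ under \eqref{eq:5} is then the quoted result \cite[Theorem~2]{KramSch:03}, while conversely the existence of $Y(y)$ gives $\EP{V(Y_T(y))}=v(y)<\infty$, hence \eqref{eq:7}. This settles the first assertion.

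The core is the conditional comparison inequality, which I would obtain from a pasting argument combined with the optimality of $\widehat Y\set Y(y)$. Fix a stopping time $\tau$, a competitor $Y\in\mathcal{Y}$ with $Y>0$, and a set $B\in\mathcal{F}_\tau$, and define the pasted process
\[
  \widetilde Y_t \set \widehat Y_{t\wedge\tau}\left(\setind{B^c}+\setind{B}\frac{Y_{t\vee\tau}}{Y_\tau}\right),\quad t\in[0,T],
\]
which follows $\widehat Y$ up to $\tau$ and, on $B$, switches to the multiplicative increments of $Y$ afterwards. The key step is to check that $\widetilde Y\in\mathcal{Y}$ with $\widetilde Y_0=y$: on $[0,\tau]$ one has $X\widetilde Y=X\widehat Y$, a supermartingale; on $[\tau,T]$ and on $B$ the process $X\widetilde Y$ equals the nonnegative $\mathcal{F}_\tau$-measurable multiple $\widehat Y_\tau/Y_\tau$ of the supermartingale $XY$ (and $X\widehat Y$ on $B^c$); and the supermartingale inequality across $\tau$ follows since $\cEP{\tau}{X_t\widetilde Y_t}\leq X_\tau\widehat Y_\tau=X_\tau\widetilde Y_\tau$ for $t\geq\tau$, using that $XY$ and $X\widehat Y$ are supermartingales.

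Granting this, the homogeneity of $z\mapsto z^{-r}$ gives
\[
  \EP{\widetilde Y_T^{-r}}-\EP{\widehat Y_T^{-r}}
  =\EP{\setind{B}\,\widehat Y_\tau^{-r}\left(Y_\tau^{r}\cEP{\tau}{Y_T^{-r}}-\widehat Y_\tau^{r}\cEP{\tau}{\widehat Y_T^{-r}}\right)}.
\]
Here $\widehat Y>0$ (a nonnegative supermartingale cannot return from $0$, and $\widehat Y_T>0$ a.s.\ since $\EP{\widehat Y_T^{-r}}<\infty$), so the divisions are legitimate, and restricting $B$ to $\braces{\cEP{\tau}{Y_T^{-r}}<\infty}$ removes any $\infty-\infty$ ambiguity. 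As $\widehat Y$ is the dual minimizer and $\widetilde Y$ is admissible, the left-hand side is $\geq0$ for every such $B$; choosing $B$ to be the set on which the bracket is negative forces that set to be null, which is precisely $\widehat Y_\tau^{r}\cEP{\tau}{\widehat Y_T^{-r}}\leq Y_\tau^{r}\cEP{\tau}{Y_T^{-r}}$, the asserted conditional comparison.

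Finally, the $\Ap{p}$ equivalence is an immediate corollary, since $\Ap{p}$ for an element $R$ is exactly a uniform-in-$\tau$ bound on $\cEP{\tau}{(R_\tau/R_T)^{r}}$. If some $Y\in\mathcal{Y}$ satisfies $\Ap{p}$ with constant $\C$, then $Y>0$ (otherwise $Y_\tau=0$ would force $Y_T=0$, contradicting $Y_T>0$) and, taking $\tau=0$, \eqref{eq:7} holds, so $\widehat Y$ exists; the comparison inequality then gives $\cEP{\tau}{(\widehat Y_\tau/\widehat Y_T)^{r}}\leq\cEP{\tau}{(Y_\tau/Y_T)^{r}}\leq\C$ for every $\tau$, i.e.\ $Y(y)$ satisfies $\Ap{p}$, while the converse is trivial. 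I expect the main obstacle to be the pasting-stability step, namely verifying $\widetilde Y\in\mathcal{Y}$; once the dual domain is known to be closed under this multiplicative pasting, the comparison inequality and the $\Ap{p}$ statement follow from homogeneity and a routine optimality (essential-infimum) argument.
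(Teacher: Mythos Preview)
Your proof is correct and follows essentially the same route as the paper: the identification of $V(y)=(p-1)y^{-1/(p-1)}$ reducing existence to~\eqref{eq:5}, the multiplicative pasting $\widetilde Y$ of $\widehat Y$ with the increments of $Y$ after $\tau$ on a set $B\in\mathcal{F}_\tau$, and the choice of $B$ as the ``bad'' set combined with the dual optimality of $\widehat Y$ to force the comparison inequality. Your treatment is in fact slightly more careful than the paper's in justifying $\widehat Y>0$ and in handling the case where $\cEP{\tau}{Y_T^{-r}}=\infty$.
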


\begin{proof}
  Observe that the convex conjugate to $U$ is given by
  \begin{displaymath}
    V(y) = \frac{a}{1-a} y^{-\frac{1-a}{a}} = (p-1) y^{-\frac1{p-1}}, \quad y>0. 
  \end{displaymath}
  Then~\eqref{eq:7} is equivalent to~\eqref{eq:5}, which, in turn, is
  equivalent to the existence of the optimal $Y(y)$, $y>0$. Denote
  $\widehat Y \set Y(1)$. Clearly, $Y(y) = y\widehat Y$.

  Let a stopping time $\tau$ and a process $Y\in \mathcal{Y}$, $Y>0$,
  be such that
  \begin{displaymath}
    \cEP{\tau}{\left(\frac{Y_\tau}{Y_T}\right)^{\frac1{p-1}}}<\infty. 
  \end{displaymath}
  We have to show that
  \begin{displaymath}
    \xi \set \cEP{\tau}{\left(\frac{\widehat
          Y_\tau}{\widehat Y_T}\right)^{\frac1{p-1}}} -
    \cEP{\tau}{\left(\frac{Y_\tau}{Y_T}\right)^{\frac1{p-1}}}
    \leq 0.  
  \end{displaymath}

  For a set $A\in \mathcal{F}_\tau$, the process
  \begin{displaymath}
    Z_t \set \widehat Y_t \ind{t\leq \tau} + \widehat Y_{\tau}
    \left(\frac{Y_t}{Y_\tau}\setind{A} + \frac{\widehat
        Y_t}{\widehat Y_\tau}(1 -
      \setind{A})\right) \ind{t>\tau}, \quad t\in [0,T],  
  \end{displaymath}
  belongs to $\mathcal{Y}$ and is such that $Z_0=1$ and $Z_\tau =
  \widehat Y_\tau$. We obtain that
  \begin{align*}
    \cEP{\tau}{\left(\frac{Z_\tau}{Z_T}\right)^{\frac1{p-1}}} &=
    \cEP{\tau}{\left(\frac{Y_\tau}{Y_T}\right)^{\frac1{p-1}}}
    \setind{A} + \cEP{\tau}{\left(\frac{\widehat Y_\tau}{\widehat
          Y_T}\right)^{\frac1{p-1}}} (1 -
    \setind{A}) \\
    &= \cEP{\tau}{\left(\frac{\widehat Y_\tau}{\widehat
          Y_T}\right)^{\frac1{p-1}}} - \xi \setind{A}.
  \end{align*}
  Dividing both sides by $Z_\tau^{\frac1{p-1}} = {\widehat
    Y}_\tau^{\frac1{p-1}}$ and choosing $A = \braces{\xi \geq 0}$, we
  deduce that
  \begin{displaymath}
    \EP{\left(\frac1{Z_T}\right)^{\frac1{p-1}}} =
    \EP{\left(\frac1{\widehat
          Y_T}\right)^{\frac1{p-1}}} -
    \EP{\left(\frac1{\widehat
          Y_\tau}\right)^{\frac1{p-1}} \max(\xi,0)}.
  \end{displaymath}
  However, the optimality of $\widehat Y=Y(1)$ implies that
  \begin{displaymath}
    \EP{\left(\frac1{\widehat
          Y_T}\right)^{\frac1{p-1}}} \leq 
    \EP{\left(\frac1{Z_T}\right)^{\frac1{p-1}}}.
  \end{displaymath}
  Hence $\xi\leq 0$.
\end{proof}

We now state the main result of the section.

\begin{Theorem}
  \label{th:1}
  Let~\eqref{eq:2} hold. Suppose that there are constants $0<a<1$,
  $b\geq a$ and $\C>0$ such that
  \begin{equation}
    \label{eq:8}
    \frac1{\C} \left(\frac{y}{x}\right)^a \leq \frac{U'(x)}{U'(y)} \leq
    \C \left(\frac{y}{x}\right)^b, \quad x\leq y, 
  \end{equation}
  and there is a supermartingale $Z\in \mathcal{Y}$ satisfying
  $\Ap{p}$ with
  \begin{displaymath}
    p = \frac1{1-a}. 
  \end{displaymath}
  Then for every $y>0$, the solution $Y(y)$ to~\eqref{eq:4} exists and
  satisfies $\Ap{p'}$ with
  \begin{displaymath}
    p' = 1 + \frac{b}{1-a}.
  \end{displaymath}
\end{Theorem}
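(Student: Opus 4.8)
The plan is to pass the $\Ap{p}$ regularity of $Z$ to the dual minimizer $\widehat Y$ through the primal optimizer $\widehat X$, using \eqref{eq:8} to trade powers of $\widehat Y$ for powers of $\widehat X$. First I would establish existence by verifying \eqref{eq:5}. The lower bound $a$ on the relative risk aversion, i.e.\ the left inequality in \eqref{eq:8}, gives $U'(x)\le\C' x^{-a}$ for $x\ge 1$; inverting, $I=(U')^{-1}=-V'$ satisfies $I(z)\le\C'' z^{-1/a}$ for small $z$, and integrating yields $V(z)\le\C'''\bigl(1+z^{-1/(p-1)}\bigr)$. Applying Definition~\ref{def:1} to $Z$ with $\tau=0$ gives $\EP{Z_T^{-1/(p-1)}}\le\C Z_0^{-1/(p-1)}<\infty$, so $\EP{V(yZ_T/Z_0)}<\infty$ and \eqref{eq:5} holds. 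By \cite{KramSch:03} the minimizer $\widehat Y=Y(y)$ then exists, and for $x$ with $y=u'(x)$ the primal optimizer $\widehat X=X(x)$ satisfies $U'(\widehat X_T)=\widehat Y_T$ with $\widehat X\widehat Y$ a uniformly integrable martingale.

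The heart of the argument is a conditional reverse-H\"older inequality for $\widehat X$, which is where the $\Ap{p}$ bound of $Z$ is used decisively. Since $\widehat X Z$ is a supermartingale and $Z$ satisfies $\Ap{p}$, I would write $(\widehat X_T/\widehat X_\tau)^{1-a}=\bigl(\tfrac{\widehat X_T Z_T}{\widehat X_\tau Z_\tau}\bigr)^{1-a}\bigl(\tfrac{Z_\tau}{Z_T}\bigr)^{1-a}$ and apply the conditional H\"older inequality with the conjugate exponents $1/(1-a)$ and $1/a$:
\[
  \cEP{\tau}{\Bigl(\tfrac{\widehat X_T}{\widehat X_\tau}\Bigr)^{1-a}}
  \le \Bigl(\cEP{\tau}{\tfrac{\widehat X_T Z_T}{\widehat X_\tau Z_\tau}}\Bigr)^{1-a}
      \Bigl(\cEP{\tau}{\bigl(\tfrac{Z_\tau}{Z_T}\bigr)^{\frac1{p-1}}}\Bigr)^{a}
  \le \C^{a},
\]
using $\cEP{\tau}{\widehat X_T Z_T}\le\widehat X_\tau Z_\tau$ and the $\Ap{p}$ bound for $Z$. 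The exponents match precisely because the power attached to $Z$ is $1/(p-1)=(1-a)/a$, which is exactly the power singled out by \eqref{eq:1}.

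It then remains to convert this into the $\Ap{p'}$ inequality $\cEP{\tau}{(\widehat Y_\tau/\widehat Y_T)^{1/(p'-1)}}\le\C'$, where $1/(p'-1)=(1-a)/b$. On $\{\widehat Y_T>\widehat Y_\tau\}$ the integrand is at most $1$. On $\{\widehat Y_T\le\widehat Y_\tau\}$ I would use the comparison
\[
  \frac{\widehat Y_\tau}{\widehat Y_T}\le\C\Bigl(\frac{\widehat X_T}{\widehat X_\tau}\Bigr)^{b},
  \qquad\text{equivalently}\qquad
  \Bigl(\frac{\widehat Y_\tau}{\widehat Y_T}\Bigr)^{\frac1{p'-1}}\le\C^{\frac1{p'-1}}\Bigl(\frac{\widehat X_T}{\widehat X_\tau}\Bigr)^{1-a},
\]
so that the previous display gives $\cEP{\tau}{(\widehat Y_\tau/\widehat Y_T)^{1/(p'-1)}}\le 1+\C^{1/(p'-1)}\C^{a}$, which is $\Ap{p'}$.

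The comparison in the last display is exactly the right inequality in \eqref{eq:8}, applied with $x=\widehat X_\tau\le\widehat X_T=y$ and $U'(\widehat X_\tau)=\widehat Y_\tau$, $U'(\widehat X_T)=\widehat Y_T$ --- provided the first-order condition $\widehat Y_\tau=U'(\widehat X_\tau)$ holds at the intermediate time $\tau$. This is the main obstacle: unlike at the terminal time, $\widehat X_\tau\ne I(\widehat Y_\tau)$ in general, so the marginal identity is unavailable at $\tau$. I would recover the needed comparison from the conditional optimality of $\widehat Y$ produced by the splicing construction in the proof of Proposition~\ref{prop:1}, namely $\cEP{\tau}{V(\widehat Y_T)}\le\cEP{\tau}{V(\widehat Y_\tau Y_T/Y_\tau)}$ for every $Y\in\mathcal{Y}$, $Y>0$, together with the martingale identity $\cEP{\tau}{\widehat X_T\widehat Y_T}=\widehat X_\tau\widehat Y_\tau$ and the two-sided relation $V(w)\asymp wI(w)$ coming from \eqref{eq:8}; these pin the conditional relationship between $\widehat X_\tau$ and $\widehat Y_\tau$ within the bounds \eqref{eq:8}, and it is the upper exponent $b$ entering here that turns $p$ into $p'=1+b/(1-a)$.
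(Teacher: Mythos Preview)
Your architecture is exactly the paper's: verify~\eqref{eq:5} from the left inequality in~\eqref{eq:8} and $\Ap{p}$ for $Z$; pass from $(\widehat Y_\tau/\widehat Y_T)^{1/(p'-1)}$ to $(\widehat X_T/\widehat X_\tau)^{1-a}$ via the right inequality in~\eqref{eq:8}; and bound the latter by H\"older using the supermartingale property of $\widehat X Z$ together with $\Ap{p}$ for $Z$. That last display of yours is precisely the closing computation in the paper's proof.

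The gap is exactly where you locate it, and your proposed resolution does not close it. What is needed is the one-sided comparison $\widehat X_\tau \le C_2\, I(\widehat Y_\tau)$ at every stopping time (equivalently, $\cEP{\tau}{I(\widehat Y_T)\widehat Y_T}\le C_2\, I(\widehat Y_\tau)\widehat Y_\tau$ via the martingale property of $\widehat X\widehat Y$). The paper isolates this as a separate lemma (Lemma~\ref{lem:3}) and proves it by splicing $Z$ after $\tau$, writing $I(\widehat Y_T)\widehat Y_T=\xi_1+\xi_2+\xi_3$ according to the events $\{\widehat Y_\tau\le\widehat Y_T\}$, $\{Y_T\le\widehat Y_T<\widehat Y_\tau\}$, $\{\widehat Y_T<Y_T,\ \widehat Y_T<\widehat Y_\tau\}$, and estimating each piece with~\eqref{eq:11} and the $\Ap{p}$ bound for $Z$; a H\"older step in $\xi_3$ then produces an inequality of the form $A\le\eta+C_1\eta^{a}A^{1-a}$, which is solved for $A=\cEP{\tau}{I(\widehat Y_T)\widehat Y_T}$. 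This is the substantive part of the theorem and uses only the lower exponent $a$; the upper exponent $b$ enters only afterwards, exactly as in your final display.

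By contrast, your suggested route --- conditional optimality $\cEP{\tau}{V(\widehat Y_T)}\le\cEP{\tau}{V(\widehat Y_\tau Y_T/Y_\tau)}$ plus a claimed two-sided equivalence $V(w)\asymp wI(w)$ --- is not a proof as stated. First, the equivalence $V(w)\asymp wI(w)$ does not follow from~\eqref{eq:8} in the required form: for $b\ge 1$ one may have $U(0)=-\infty$ (think of $U=\log$), so $V$ is unbounded below and no such two-sided relation can hold; even for $b<1$ the constants coming out of~\eqref{eq:8} need not give a useful lower bound. Second, even granting workable bounds on $V$, the chain from conditional $V$-optimality to $\widehat X_\tau\le C\,I(\widehat Y_\tau)$ requires exactly the kind of case splitting and absorption argument that Lemma~\ref{lem:3} carries out. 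So the step you defer is the heart of the proof, and you should supply the decomposition argument (or an equivalent) rather than the heuristic involving $V$.
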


\begin{Remark}
  \label{rem:1}
  Notice that if the relative risk-aversion of $U$ is well-defined and
  bounded away from $0$ and $\infty$, then in~\eqref{eq:8} we can take
  $C=1$ and choose $a$ and $b$ as lower and upper bounds:
  \begin{displaymath}
    0< a \leq  -\frac{xU''(x)}{U'(x)} \leq b<\infty, \quad x>0. 
  \end{displaymath}
  In particular, if
  \begin{displaymath}
    1 \leq -\frac{xU''(x)}{U'(x)} \leq b, \quad x>0,
  \end{displaymath}
  then choosing $a\in (0,1)$ sufficiently close to $1$ we fulfill the
  conditions of Theorem~\ref{th:1} if there exists a supermartingale
  $Z\in \mathcal{Y}$ satisfying $\Ap{p}$ for \emph{some} $p>1$.

  Observe also that for the positive power utility function $U$ with
  relative risk-aversion $a\in (0,1)$ we can select $b=a$ and then
  obtain same estimate as in Proposition~\ref{prop:1}:
  \begin{displaymath}
    p' = 1 + \frac{a}{1-a} = \frac1{1-a}=p.
  \end{displaymath} 
\end{Remark}

The proof of Theorem~\ref{th:1} relies on the following lemma.

\begin{Lemma}
  \label{lem:3}
  Assume~\eqref{eq:2} and suppose that there are constants $0<a<1$ and
  $\C_1>0$ such that
  \begin{equation}
    \label{eq:9}
    \frac1{\C_1} \left(\frac{y}{x}\right)^a \leq \frac{U'(x)}{U'(y)},
    \quad x\leq y, 
  \end{equation}
  and there is a supermartingale $Z\in \mathcal{Y}$ satisfying
  $\Ap{p}$ with
  \begin{displaymath}
    p = \frac1{1-a}. 
  \end{displaymath}
  Then for every $y>0$ the solution $Y(y)$ to~\eqref{eq:4} exists, and
  there is a constant $\C_2>0$ such that for every stopping time
  $\tau$ and every $y>0$,
  \begin{equation}
    \label{eq:10}
    \cEP{\tau}{I(Y_T(y))Y_T(y)}  \leq \C_2 I(Y_\tau(y))Y_\tau(y),  
  \end{equation}
  where $I=-V'$.
\end{Lemma}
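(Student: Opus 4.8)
\emph{Existence of $Y(y)$.} The first step is to reduce to the cited existence result by verifying the finiteness~\eqref{eq:5}. Rewriting~\eqref{eq:9} in terms of $I=(U')^{-1}$ gives, for $0<w\le z$, the bound $I(w)\le (\C_1 z/w)^{1/a}I(z)$; taking $z=1$ yields $I(w)\le \C_1^{1/a}I(1)\,w^{-1/a}$ for $w\le 1$. Since $V'=-I$ and $1-\tfrac1a=-\tfrac1{p-1}$, integrating produces $V(w)\le \mathrm{const}\cdot(1+w^{-1/(p-1)})$ near $0$, and as $V$ is decreasing this bounds $V^+$ on all of $(0,\infty)$ by a multiple of $1+w^{-1/(p-1)}$. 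Choosing $Y=(y/Z_0)Z\in\mathcal{Y}$ and applying $\Ap{p}$ at $\tau=0$, namely $\EP{Z_T^{-1/(p-1)}}\le \C Z_0^{-1/(p-1)}<\infty$, gives $\EP{V(Y_T)}<\infty$ and hence~\eqref{eq:5}. Then \cite[Theorem~2]{KramSch:03} yields the minimizer $Y(y)$.

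\emph{Reduction to $g$.} Write $g(s)\set I(s)s$, so that the target~\eqref{eq:10} reads $\cEP{\tau}{g(Y_T(y))}\le \C_2\,g(Y_\tau(y))$. Condition~\eqref{eq:9} is equivalent to the two-sided comparison, with $c=\C_1^{-1/a}$,
\[
g(s')\ge c\Big(\tfrac{s}{s'}\Big)^{\frac1{p-1}}g(s)\ \ (s'\ge s),\qquad
g(s')\le c^{-1}\Big(\tfrac{s}{s'}\Big)^{\frac1{p-1}}g(s)\ \ (s'\le s);
\]
these are the only properties of $U$ I will use. Abbreviating $\eta\set Y_\tau(y)$, I split $\cEP{\tau}{g(Y_T)}$ over $\{Y_T\ge\eta\}$ and $\{Y_T<\eta\}$.

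\emph{Large part and optimality part.} On $\{Y_T\ge\eta\}$, monotonicity of $I$ gives $I(Y_T)\le I(\eta)$, so that $\cEP{\tau}{g(Y_T)\ind{Y_T\ge\eta}}\le I(\eta)\cEP{\tau}{Y_T}\le I(\eta)\eta=g(\eta)$, using only that $Y(y)$ is a supermartingale. For $\{Y_T<\eta\}$ I invoke optimality. Exactly as in the construction in the proof of Proposition~\ref{prop:1} (with the competing process taken to be our $\Ap{p}$ supermartingale $Z$, and localizing over $A\in\mathcal{F}_\tau$), minimality of $Y(y)$ yields $\cEP{\tau}{V(Y_T)}\le \cEP{\tau}{V(\eta Z_T/Z_\tau)}$. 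The second comparison above gives the pointwise estimate $V(\eta\rho)\le V(\eta)+c^{-1}(p-1)g(\eta)\rho^{-1/(p-1)}$ for every $\rho>0$; setting $\rho=Z_T/Z_\tau$ and using $\Ap{p}$, i.e. $\cEP{\tau}{(Z_\tau/Z_T)^{1/(p-1)}}\le \C$, produces the key estimate $\cEP{\tau}{V(Y_T)}\le V(\eta)+\C_3\,g(\eta)$ with $\C_3=c^{-1}(p-1)\C$.

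\emph{From $V$-control to $g$-control, and the obstacle.} Put $D=V(Y_T)-V(\eta)$, so $\cEP{\tau}{D}\le \C_3 g(\eta)$. On $\{Y_T<\eta\}$ the first comparison of the previous paragraph, inserted into $V(Y_T)-V(\eta)=\int_{Y_T}^{\eta}I(s)\,ds$, gives the pointwise bound $g(Y_T)\le c^{-1}2^{1/(p-1)}g(\eta)+\frac{1}{c(p-1)\kappa}D$ with $\kappa=1-2^{-1/(p-1)}$, whence $\cEP{\tau}{g(Y_T)\ind{Y_T<\eta}}\le c^{-1}2^{1/(p-1)}g(\eta)+\frac{1}{c(p-1)\kappa}\cEP{\tau}{D^+}$. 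It remains to control $\cEP{\tau}{D^+}=\cEP{\tau}{D}+\cEP{\tau}{D^-}$: the first term is $\le \C_3 g(\eta)$, while on $\{Y_T\ge\eta\}$ one has $V(\eta)-V(Y_T)=\int_\eta^{Y_T}I(s)\,ds\le I(\eta)Y_T$, so $\cEP{\tau}{D^-}\le I(\eta)\cEP{\tau}{Y_T}\le g(\eta)$. Adding the two parts yields~\eqref{eq:10} with a constant $\C_2$ depending only on $a$, $\C_1$, $p$ and the $\Ap{p}$-constant $\C$, hence uniform in $\tau$ and $y$. I expect the main difficulty to be precisely this last conversion: \eqref{eq:9} gives no upper bound on $g$ at large arguments, so $g(Y_T)$ on $\{Y_T\ge\eta\}$ is a priori uncontrolled, and the optimality/$\Ap{p}$ input bounds only $\cEP{\tau}{V(Y_T)}$ rather than $\cEP{\tau}{g(Y_T)}$. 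Both issues are resolved by the elementary observation that $I(Y_T)\le I(\eta)$ on $\{Y_T\ge\eta\}$ and by transferring the $V$-estimate to $g$ through the two-sided comparison together with the splitting of $D$ into $D^{\pm}$.
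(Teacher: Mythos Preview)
Your argument is correct, but it follows a genuinely different route from the paper's proof. The paper also treats the set $\{\widehat Y_T\ge \widehat Y_\tau\}$ exactly as you do, but on $\{\widehat Y_T<\widehat Y_\tau\}$ it introduces the auxiliary process $Y_t=\widehat Y_t\ind{t\le\tau}+\widehat Y_\tau(Z_t/Z_\tau)\ind{t>\tau}$ and splits further according to whether $Y_T\le\widehat Y_T$ or $Y_T>\widehat Y_T$. The second piece is handled directly via~\eqref{eq:9} and $\Ap{p}$; the third is estimated by H\"older's inequality with exponents $1/a$ and $1/(1-a)$, after which optimality enters only through the primal--dual identity $I(\widehat Y_T)=\widehat X_T$ and the supermartingale property of $\widehat X Y$. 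This produces a self-referential bound of the form $\cEP{\tau}{g(\widehat Y_T)}\le \eta + C_1\eta^{a}\bigl(\cEP{\tau}{g(\widehat Y_T)}\bigr)^{1-a}$, which is then solved for the unknown.

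By contrast, you invoke optimality through the ``switching'' construction of Proposition~\ref{prop:1} to obtain the conditional estimate $\cEP{\tau}{V(\widehat Y_T)}\le V(\eta)+C_3\,g(\eta)$, and then transfer this $V$-control to $g$ by the elementary comparison $D\ge c(p-1)g(\widehat Y_T)\bigl(1-(\widehat Y_T/\eta)^{1/(p-1)}\bigr)$ together with the $D^{\pm}$ split. This avoids both H\"older's inequality and the nonlinear fixed-point step, and arguably makes the role of optimality more transparent. The trade-off is that you pass through $V$, so a little care is needed with integrability when subtracting $\EP{V(\widehat Y_T)\setind{A^c}}$ in the switching argument; this is harmless here because $V(\widehat Y_T)=U(\widehat X_T)-\widehat X_T\widehat Y_T$ is integrable (both summands are), and your localization over $A\in\mathcal{F}_\tau$ with $g(\eta)$ and $|V(\eta)|$ bounded handles the rest. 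The paper's route, working throughout with $g(\widehat Y_T)=\widehat X_T\widehat Y_T$, sidesteps this point entirely.
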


\begin{Remark}
  \label{rem:2}
  Recall that for $x=-v'(y)$ the optimal wealth process $X(x)$ has the
  terminal value
  \begin{displaymath}
    X_T(x) = -V'(Y_T(y)) = I(Y_T(y))
  \end{displaymath}
  and the product $X(x)Y(y)$ is a uniformly integrable martingale. It
  follows that for every stopping time $\tau$
  \begin{displaymath}
    X_\tau(x) = \frac1{Y_\tau(y)} \cEP{\tau}{I(Y_T(y))Y_T(y)}
  \end{displaymath}
  and therefore, inequality~\eqref{eq:10} is equivalent to
  \begin{displaymath}
    X_\tau(x) \leq \C_2 I(Y_\tau(y)). 
  \end{displaymath}
\end{Remark}

\begin{proof}[Proof of Lemma~\ref{lem:3}.]
  To show the existence of $Y(y)$ we need to verify~\eqref{eq:5}.  As
  $I =-V'$ is the inverse function to $U$, condition~\eqref{eq:9} is
  equivalent to
  \begin{equation}
    \label{eq:11}
    \frac{I(x)}{I(y)} \leq \C_3 \left(\frac{y}{x}\right)^{1/a},
    \quad x\leq y,  
  \end{equation}
  where $\C_3 = \C_1^{1/a}$. From~\eqref{eq:11} we deduce that for
  $y\leq 1$
  \begin{align*}
    V(y) &= V(1) + \int_y^1 I(t) dt \leq V(1) + \C_3I(1)\int_y^1
    t^{-1/a} dt \\
    &= V(1) + \C_3I(1)\frac{a}{1-a}
    (y^{-\frac{1-a}{a}}-1) \\
    &= V(1) + \C_3I(1)(p-1) (y^{-\frac{1}{p-1}}-1).
  \end{align*}
  Hence, there is a constant $\C_4>0$ such that
  \begin{displaymath}
    V(y) \leq  \C_4(1+y^{-\frac1{p-1}}), \quad y>0. 
  \end{displaymath}
  As $Z$ satisfies $\Ap{p}$, we have
  \begin{displaymath}
    \EP{Z_T^{-\frac1{p-1}}} < \infty.
  \end{displaymath}
  It follows that
  \begin{displaymath}
    v(y) \leq \EP{V(yZ_T/Z_0)} < \infty, \quad y>0, 
  \end{displaymath}
  which completes the proof of the existence of $Y(y)$.

  Let $\tau$ be a stopping time and let $y>0$. We set $\widehat Y \set
  Y(y)$ and define the process
  \begin{displaymath}
    Y_t \set \widehat Y_t \ind{t\leq \tau} + \widehat Y_\tau
    \frac{Z_t}{Z_\tau} \ind{t>\tau}, \quad t\in [0,T]. 
  \end{displaymath}
  Clearly, $Y\in \mathcal{Y}$ and $Y_0 = \widehat Y_0 = y$.  We
  represent
  \begin{displaymath}
    I(\widehat Y_T) \widehat Y_T = \xi_1 + \xi_2 + \xi_3,
  \end{displaymath}
  by multiplying the left-side on the elements of the unity
  decomposition:
  \begin{displaymath}
    1 = \ind{\widehat Y_\tau \leq \widehat Y_T} + \ind{Y_T \leq \widehat
      Y_T < \widehat Y_\tau} + \ind{\widehat Y_T < Y_T, \widehat Y_T <
      \widehat Y_\tau}.
  \end{displaymath}

  For the first term, since $I = -V'$ is a decreasing function, we
  have that
  \begin{displaymath}
    \xi_1 = I(\widehat Y_T)\widehat Y_T \ind{\widehat Y_\tau \leq
      \widehat Y_T} \leq 
    I(\widehat Y_\tau) \widehat Y_T.
  \end{displaymath}
  Using the supermartingale property of $\widehat Y$, we obtain that
  \begin{displaymath}
    \cEP{\tau}{\xi_1} \leq  I(\widehat Y_\tau) \widehat Y_\tau. 
  \end{displaymath}

  For the second term, we deduce from~\eqref{eq:11} that
  \begin{align*}
    \xi_2 &= I(\widehat Y_T)\widehat Y_T \ind{Y_T \leq \widehat Y_T <
      \widehat Y_\tau} = I(\widehat Y_T) {\widehat Y_T}^{\frac1{a}}
    {\widehat Y_T}^{-\frac{1-a}{a}} \ind{Y_T \leq \widehat Y_T <
      \widehat Y_\tau} \\
    &\leq \C_3 I(\widehat Y_\tau) {\widehat Y_\tau}^{\frac1{a}}
    {Y_T}^{-\frac{1-a}{a}} = \C_3 I(\widehat Y_\tau) \widehat Y_\tau
    \left(\frac{Z_\tau}{Z_T}\right)^{\frac{1-a}{a}} = \C_3 I(\widehat
    Y_\tau) \widehat Y_\tau
    \left(\frac{Z_\tau}{Z_T}\right)^{\frac1{p-1}}
  \end{align*}
  and the $\Ap{p}$ condition for $Z$ yields the existence of a
  constant $\C_5>0$ such that
  \begin{displaymath}
    \cEP{\tau}{\xi_2} \leq \C_5  I(\widehat
    Y_\tau) \widehat Y_\tau. 
  \end{displaymath}

  For the third term, we deduce from~\eqref{eq:11} that
  \begin{align*}
    \xi_3 &= I(\widehat Y_T)\widehat Y_T \ind{\widehat Y_T < Y_T,
      \widehat Y_T < \widehat Y_\tau} \leq I(\widehat Y_T)\widehat Y_T
    \ind{\widehat
      Y_T < Y_T} \\
    &= I(\widehat Y_T)^{a} \widehat Y_T I(\widehat Y_T)^{1-a}
    \ind{\widehat Y_T < Y_T} \leq \C_1 I(Y_T)^{a} Y_T I(\widehat
    Y_T)^{1-a}
    \\
    & = \C_1(I(Y_T) Y_T)^{a} (I(\widehat Y_T) Y_T)^{1-a}
  \end{align*}
  and then from H\"older's inequality that
  \begin{displaymath}
    \cEP{\tau}{\xi_3} \leq \C_1(\cEP{\tau}{I(Y_T) Y_T})^{a}
    \left(\cEP{\tau}{I(\widehat Y_T) Y_T}\right)^{1-a}.
  \end{displaymath}

  We recall that the terminal wealth of the optimal investment
  strategy with $\widehat X_0 = -v'(y)$ is given by
  \begin{displaymath}
    I(\widehat Y_T) = \widehat X_T. 
  \end{displaymath}
  It follows that
  \begin{align*}
    \cEP{\tau}{I(\widehat Y_T) Y_T} &= \cEP{\tau}{\widehat X_T
      Y_T} \leq  \widehat X_\tau Y_\tau = \widehat X_\tau \widehat Y_\tau \\
    & = \cEP{\tau}{\widehat X_T \widehat Y_T} = \cEP{\tau}{I(\widehat
      Y_T) \widehat Y_T}.
  \end{align*}

  To estimate $\cEP{\tau}{I(Y_T) Y_T}$ we decompose
  \begin{displaymath}
    I(Y_T) Y_T = I(Y_T) Y_T \ind{\widehat Y_\tau \leq Y_T} + I(Y_T) Y_T
    \ind{\widehat Y_\tau > Y_T}. 
  \end{displaymath}
  Since $I$ is decreasing, we have that
  \begin{displaymath}
    I(Y_T) Y_T \ind{\widehat Y_\tau \leq Y_T} \leq I(\widehat Y_\tau) Y_T.
  \end{displaymath} 
  As $Y$ is a supermartingale and $Y_\tau = \widehat Y_\tau$, we
  obtain that
  \begin{displaymath}
    \cEP{\tau}{I(Y_T) Y_T \ind{\widehat Y_\tau \leq Y_T}} \leq
    I(\widehat Y_\tau) Y_\tau = I(\widehat Y_\tau) \widehat Y_\tau.  
  \end{displaymath}
  For the second term, using~\eqref{eq:11} we deduce that
  \begin{align*}
    I(Y_T) Y_T \ind{\widehat Y_\tau > Y_T} &= I(Y_T) Y_T^{\frac1a}
    Y_T^{-\frac{1-a}a} \ind{\widehat Y_\tau > Y_T} \leq \C_3
    I(\widehat Y_\tau)
    {\widehat Y_\tau}^{\frac1a} Y_T^{-\frac{1-a}a} \\
    &= \C_3 I(\widehat Y_\tau) \widehat Y_\tau \left(\frac{\widehat
        Y_\tau}{Y_T}\right)^{\frac{1-a}a} = \C_3 I(\widehat Y_\tau)
    \widehat Y_\tau \left(\frac{Z_\tau}{Z_T}\right)^{\frac{1}{p-1}}
  \end{align*}
  and the $\Ap{p}$ condition for $Z$ implies that
  \begin{displaymath}
    \cEP{\tau}{I(Y_T) Y_T \ind{\widehat Y_\tau > Y_T}} \leq \C_5
    I(\widehat Y_\tau) \widehat Y_\tau.  
  \end{displaymath}
  Thus we have
  \begin{displaymath}
    \cEP{\tau}{I(Y_T) Y_T} \leq \eta \set (1+ \C_5)
    I(\widehat Y_\tau) \widehat Y_\tau  
  \end{displaymath}
  and then
  \begin{displaymath}
    \cEP{\tau}{\xi_3} \leq C_1 \eta^{a}
    \left(\cEP{\tau}{I(\widehat Y_T) \widehat Y_T}\right)^{1-a}. 
  \end{displaymath}

  Adding together the estimates for $\cEP{\tau}{\xi_i}$ we obtain that
  \begin{displaymath}
    \cEP{\tau}{I(\widehat Y_T) \widehat Y_T} \leq \eta + C_1 \eta^{a}
    \left(\cEP{\tau}{I(\widehat Y_T) \widehat Y_T}\right)^{1-a}.
  \end{displaymath}
  It follows that
  \begin{displaymath}
    \cEP{\tau}{I(\widehat Y_T) \widehat Y_T} \leq x^* \eta =
    x^*(1+\C_5) I(\widehat Y_\tau) \widehat Y_\tau,
  \end{displaymath}
  where $x^*$ is the root of
  \begin{displaymath}
    x = 1 + C_1 x^{1-a}, \quad x>0.  
  \end{displaymath}
  We thus have proved inequality~\eqref{eq:10} with
  $\C_2=(1+\C_5)x^*$.
\end{proof}

\begin{proof}[Proof of Theorem~\ref{th:1}.]
  Fix $y>0$. In view of Lemma~\ref{lem:3}, we only have to verify that
  $\widehat Y \set Y(y)$ satisfies $\Ap{p'}$.

  Denote $\widehat X \set X(-v'(y))$ and recall that by
  Lemma~\ref{lem:3} and Remark~\ref{rem:2}, there is $\C_2>0$ such
  that, for every stopping time $\tau$,
  \begin{displaymath}
    \widehat X_\tau \leq \C_2 I(\widehat Y_\tau).   
  \end{displaymath}
  Observe also that as $I =-V'$ is the inverse function to $U'$, the
  second inequality in~\eqref{eq:8} is equivalent to
  \begin{displaymath}
    \frac{y}{x} \leq \C \left(\frac{I(x)}{I(y)}\right)^{b}, \quad x\leq y.
  \end{displaymath}

  We fix a stopping time $\tau$. Since $I(\widehat Y_T) = \widehat
  X_T$, we deduce from the inequalities above that
  \begin{displaymath}
    \left(\frac{\widehat Y_\tau}{\widehat Y_T}\right)^{1/b} \leq
    \max\left(1,  \C^{1/b} 
      \frac{I(\widehat Y_T)}{I(\widehat Y_\tau)}\right)
    \leq \max\left(1,  \C_3
      \frac{\widehat X_T}{\widehat X_\tau}\right),
  \end{displaymath}
  where $\C_3 = \C^{1/b}\C_2$.  It follows that
  \begin{align*}
    \left(\frac{\widehat Y_\tau}{\widehat Y_T}\right)^{\frac1{p'-1}}
    &= \left(\frac{\widehat Y_\tau}{\widehat
        Y_T}\right)^{\frac{1-a}{b}} \leq \max\left(1, \C_3^{1-a}
      \left(\frac{\widehat X_T}{\widehat X_\tau}\right)^{1-a}\right)
    \\
    &\leq 1 + \C_3^{1-a} \left(\frac{\widehat X_TZ_T}{\widehat X_\tau
        Z_\tau}\right)^{1-a}\left(\frac{Z_\tau}{Z_T}\right)^{1-a}.
  \end{align*}
  Denoting by $\C_1>0$ the constant in the $\Ap{p}$ condition for $Z$,
  we deduce from H\"older's inequality and the supermartingale
  property of $\widehat X Z$ that
  \begin{align*}
    \cEP{\tau}{\left(\frac{\widehat Y_\tau}{\widehat
          Y_T}\right)^{\frac1{p'-1}}} &\leq 1 + \C_3^{1-a}
    \left(\cEP{\tau}{\frac{\widehat X_TZ_T}{\widehat X_\tau
          Z_\tau}}\right)^{1-a}
    \left(\cEP{\tau}{\left(\frac{Z_\tau}{Z_T}\right)^{\frac1{p-1}}}\right)^a
    \\ &\leq 1 + \C_3^{1-a} \C_1^a.
  \end{align*}
  Hence, $\widehat Y$ satisfies $\Ap{p'}$.
\end{proof}

\section{Local martingale property for maximal elements of
  $\mathcal{Y}$}
\label{sec:local-mart-prop}

Even if the dual minimizer $Y(y)$ is uniformly integrable, it may not
be a martingale, due to the lack of the local martingale property; see
the single-period example for logarithmic utility in
\cite[Example~5.1$'$]{KramSch:99}. Proposition~\ref{prop:2} below
yields sufficient conditions for every \emph{maximal} element of
$\mathcal{Y}$ (in particular, for $Y(y)$) to be a local martingale.

A semimartingale $R$ is called \emph{$\sigma$-bounded} if there is a
predictable process $h>0$ such that the stochastic integral $\int hdR$
is bounded. Following \cite{KramSirb:06}, we make the following
assumption.

\begin{Assumption}
  \label{as:1}
  For all $X$ and $X'$ in $\mathcal{X}$ such that $X>0$, the process
  $X'/X$ is $\sigma$-bounded.
\end{Assumption}

Assumption~\ref{as:1} holds easily if stock price $S$ is
continuous. Theorem 3 in Appendix of \cite{KramSirb:06} provides a
sufficient condition in the presence of jumps. It states that
\emph{every} semimartingale $R$ is $\sigma$-bounded if there is a
finite-dimensional local martingale $M$ such that every bounded purely
discontinuous martingale $N$ is a stochastic integral with respect to
$M$.

\begin{Proposition}
  \label{prop:2}
  Suppose that Assumption~\ref{as:1} holds. Let $Y\in \mathcal{Y}$ be
  such that $YX'$ is a local martingale for some $X'\in \mathcal{X}$,
  $X'>0$. Then $YX$ is a local martingale for every $X\in
  \mathcal{X}$. In particular, $Y$ is a local martingale.
\end{Proposition}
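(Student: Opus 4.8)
The plan is to change the num\'eraire to $X'$, so that the given local martingale $YX'$ becomes a dual element of a market in which \emph{every} discounted wealth process is $\sigma$-bounded, and then to exploit this $\sigma$-boundedness to force the supermartingale $XY$ to have no drift. Writing $\widetilde X \set X/X'$ and $L\set YX'$, a standard change-of-num\'eraire argument identifies $\{\widetilde X:\; X\in\mathcal{X}\}$ with the family of non-negative wealth processes of the market with discounted prices $S/X'$; here $L\geq 0$ is a local martingale (by hypothesis) and $\widetilde X L = XY$ is a supermartingale for every such $\widetilde X$. It therefore suffices to prove that $\widetilde X L$ is a local martingale for an arbitrary $\widetilde X$ in this family; choosing $X=1$, i.e.\ $\widetilde X = 1/X'$, then shows that $Y$ itself is a local martingale.

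Fix $X\in\mathcal{X}$ and set $\widetilde X \set X/X'$. By Assumption~\ref{as:1}, applied with $X'>0$ in the denominator, $\widetilde X$ is $\sigma$-bounded: there is a predictable $h>0$ such that $G\set \int h\,d\widetilde X$ is bounded, say $\abs{G}\leq c_0$. The processes $c_0\pm G$ are then bounded, non-negative, and are themselves wealth processes of the num\'eraire market, so $(c_0\pm G)L$ are supermartingales. Their sum equals $2c_0 L$, a local martingale, and by uniqueness of the Doob--Meyer decomposition the two increasing predictable drifts of $(c_0+G)L$ and $(c_0-G)L$ must both vanish. Consequently $GL = \tfrac12\big((c_0+G)L-(c_0-G)L\big)$ is a local martingale.

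It remains to pass from $GL$ to $\widetilde X L$. Integration by parts gives $\widetilde X L = \widetilde X_0 L_0 + \int \widetilde X_-\,dL + B$, where $B\set\int L_-\,d\widetilde X + [\widetilde X,L]$; since $\widetilde X L$ is a supermartingale and $\int\widetilde X_-\,dL$ is a local martingale, $B$ is a special semimartingale whose predictable finite-variation part $-a$ (with $a$ increasing, $a_0=0$) is precisely the drift I must show is zero. Because $dG=h\,d\widetilde X$ and $[G,L]=\int h\,d[\widetilde X,L]$, one has $GL-G_0L_0-\int G_-\,dL=\int h\,dB$, so $\int h\,dB$ is a local martingale. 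For the truncations $h_n\set h\wedge n$ the identity $\int h_n\,dB=\int (h_n/h)\,d(\int h\,dB)$ shows that each $\int h_n\,dB$ is a local martingale; as $h_n$ is now bounded, its canonical decomposition has drift $-\int h_n\,da$, whence $\int h_n\,da\equiv 0$. Letting $n\to\infty$ and using $h>0$ with monotone convergence yields $\int h\,da\equiv 0$, so $a\equiv 0$ and $\widetilde X L=XY$ is a local martingale.

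The delicate step is the last one, the transfer of the local martingale property from $GL$ to $\widetilde X L$. One cannot simply ``divide by $h$'': since $h$ need not be locally bounded, $\int h\,d(\text{martingale part of }B)$ need not be a local martingale, and the naive rule ``drift of $\int h\,dB = \int h$ times the drift of $B$'' may fail. The truncation $h_n=h\wedge n$, combined with the representation $\int h_n\,dB=\int(h_n/h)\,d(\int h\,dB)$ against the already-established local martingale $\int h\,dB$, is what makes the drift computation legitimate for each $n$; the strict positivity $h>0$ then lets monotone convergence recover $\int h\,da=0$ and hence $a\equiv0$.
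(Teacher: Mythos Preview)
Your proof is correct but takes a different route from the paper's. Both arguments share the num\'eraire change and the observation that the two non-negative wealth processes $c_0\pm G$ multiplied by the dual element yield supermartingales summing to a local martingale, which forces each separately to be a local martingale. The difference lies in the passage from the bounded integral to the full process. The paper first treats the special case $X'=Y=1$, writes $X=X_0+\int(1/h)\,d\bigl(\int h\,dX\bigr)$ as a non-negative stochastic integral with respect to a martingale, and invokes the Ansel--Stricker theorem to conclude that $X$ is a local martingale; the general case is then reduced to this one by localizing $YX'$ to a uniformly integrable martingale and changing to the equivalent measure $\mathbb{Q}$ with density $X'_TY_T$, under which the problem in the $S'$-market becomes exactly the special case. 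You instead stay under $\mathbb{P}$ throughout, work directly with $L=YX'$, and replace the Ansel--Stricker step by your explicit drift-killing argument via the truncations $h_n=h\wedge n$. Your approach is more self-contained---no measure change and no external local-martingale criterion---while the paper's is shorter once Ansel--Stricker is taken for granted. One small point you use without comment: the claim that $\int\widetilde X_-\,dL$ is a local martingale relies on $\widetilde X_-$ being locally bounded, which holds because $\widetilde X$ is a c\`adl\`ag semimartingale.
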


\begin{proof}
  We assume first that $X'=Y=1$.  Let $X\in \mathcal{X}$. As $X$ is
  $\sigma$-bounded, there is a predictable $h>0$ such that
  \begin{displaymath}
    \abs{\int hdX} \leq 1. 
  \end{displaymath}
  Since the bounded non-negative processes $1\pm\int hdX$ belong to
  $\mathcal{X}$, they are supermartingales, which is only possible if
  $\int hdX$ is a martingale. It follows that $X$ is a non-negative
  stochastic integral with respect to a martingale:
  \begin{displaymath}
    X = X_0 + \int \frac1h d(\int hdX) \geq 0. 
  \end{displaymath}
  Therefore, $X$ is a local martingale, see~\cite{AnselStr:94}.  Under
  the condition $X'=Y=1$, the proof is obtained.

  We now consider the general case. Without loss of generality, we can
  assume that $X'_0 = Y_0=1$. By localization, we can also assume that
  the local martingale $YX'$ is uniformly integrable and then define a
  probability measure $\mathbb{Q}$ with the density
  \begin{displaymath}
    \frac{d\mathbb{Q}}{d\mathbb{P}} = X'_T Y_T. 
  \end{displaymath}
  Let $X\in\mathcal{X}$. We have that $XY$ is a local martingale under
  $\mathbb{P}$ if and only if $X/X'$ is a local martingale under
  $\mathbb{Q}$.

  By Assumption~\ref{as:1}, the process $X/X'$ is
  $\sigma$-bounded. Elementary computations show that $X/X'$ is a
  wealth process in the financial market with stock price
  \begin{displaymath}
    S' = \left(\frac1{X'}, \frac{S}{X'}\right);
  \end{displaymath}
  see \cite{DelbSch:95}. The result now follows by applying the
  previous argument to the $S'$-market whose reference probability
  measure is given by $\mathbb{Q}$.
\end{proof}

\section{Existence of the optimal martingale measure}
\label{sec:exist-optimal-mart}

Recall that $X(x)$ denotes the optimal wealth process for the primal
problem~\eqref{eq:3}, while $Y(y)$ stands for the minimizer to the
dual problem~\eqref{eq:4}. As usual, the \emph{density process} of a
probability measure $\mathbb{R} \ll \mathbb{P}$ is a uniformly
integrable martingale (under $\mathbb{P}$) with the terminal value
$\frac{d\mathbb{R}}{d\mathbb{P}}$.

The following is the main result of the paper.

\begin{Theorem}
  \label{th:2}
  Let Assumption~\ref{as:1} hold. Suppose that there are constants
  $0<a<1$, $b\geq a$ and $\C>0$ such that
  \begin{equation}
    \label{eq:12}
    \frac1{\C} \left(\frac{y}{x}\right)^a \leq \frac{U'(x)}{U'(y)} \leq
    \C \left(\frac{y}{x}\right)^b, \quad x\leq y, 
  \end{equation}
  and there is a martingale measure $\mathbb{Q}\in\mathcal{Q}$ whose
  density process $Z$ satisfies $\Ap{p}$ with
  \begin{equation}
    \label{eq:13}
    p = \frac1{1-a}. 
  \end{equation}
  Then for every $y>0$ the optimal martingale measure $\mathbb{Q}(y)$
  exists and its density process $Y(y)/y$ satisfies $\Ap{p'}$ with
  \begin{displaymath}
    p' = 1 + \frac{b}{1-a}.
  \end{displaymath}
\end{Theorem}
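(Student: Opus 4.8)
The plan is to read off the $\Ap{p'}$ property from Theorem~\ref{th:1} and then to upgrade the resulting regularity to the full uniform martingale property via the local martingale results of Section~\ref{sec:local-mart-prop}. First I would check that the hypotheses of Theorem~\ref{th:1} are in force: the existence of $\mathbb{Q}\in\mathcal{Q}$ gives \eqref{eq:2}; condition \eqref{eq:12} is identical to \eqref{eq:8}; and the density process $Z$ of $\mathbb{Q}$, being a uniformly integrable martingale, belongs to $\mathcal{Y}$ as a supermartingale and satisfies $\Ap{p}$ with $p=\frac1{1-a}$ by assumption. Theorem~\ref{th:1} then yields that $Y(y)$ exists and satisfies $\Ap{p'}$ with $p'=1+\frac{b}{1-a}$.

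Next, because $Y(y)\in\mathcal{Y}$ is a supermartingale with $Y_0(y)=y$, we have $\EP{Y_T(y)}\leq y<\infty$. Applying Lemma~\ref{lem:2} to $R=Y(y)$ shows that $Y(y)$ is of class $(\mathbf{D})$, i.e. the family $\descr{Y_\tau(y)}{\tau\text{ a stopping time}}$ is uniformly integrable.

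The decisive step is to establish that $Y(y)$ is in fact a local martingale; as emphasized in Section~\ref{sec:local-mart-prop}, class $(\mathbf{D})$ alone does not suffice. Here I would invoke the duality recalled just after \eqref{eq:6}: setting $x=-v'(y)$, the optimal wealth process $X(x)$ is strictly positive (a non-negative wealth process with $X_T(x)=I(Y_T(y))>0$ stays positive) and the product $X(x)Y(y)$ is a uniformly integrable, hence local, martingale. Applying Proposition~\ref{prop:2} with $X'=X(x)$---legitimate under Assumption~\ref{as:1}---we conclude that $Y(y)X$ is a local martingale for every $X\in\mathcal{X}$, and in particular that $Y(y)$ itself is a local martingale.

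Finally, a local martingale of class $(\mathbf{D})$ is a uniformly integrable martingale, so $Y(y)$ is a uniformly integrable martingale. Since $Y_T(y)=U'(X_T(x))>0$ and $Y_0(y)=y>0$, the normalized process $Y(y)/y$ is the density process of a probability measure $\mathbb{Q}(y)\sim\mathbb{P}$; the local martingale property of each product $Y(y)X$ under $\mathbb{P}$ translates, via Bayes' rule, into $X$ being a local martingale under $\mathbb{Q}(y)$, so that $\mathbb{Q}(y)\in\mathcal{Q}$ is the optimal martingale measure, with $Y(y)/y$ satisfying $\Ap{p'}$. The main obstacle is the third step above: passing from class $(\mathbf{D})$ to the local martingale property, which is precisely where Assumption~\ref{as:1} enters through Proposition~\ref{prop:2} and where the strict positivity of $X(x)$ together with the martingale property of the product $X(x)Y(y)$ from the Kramkov--Schachermayer duality are needed.
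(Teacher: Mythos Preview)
Your proposal is correct and follows essentially the same route as the paper's own proof: invoke Theorem~\ref{th:1} to obtain existence of $Y(y)$ and its $\Ap{p'}$ property, apply Lemma~\ref{lem:2} for class $(\mathbf{D})$, use Proposition~\ref{prop:2} together with Assumption~\ref{as:1} and the martingale property of $X(-v'(y))Y(y)$ for the local martingale property, and combine these to conclude that $Y(y)/y$ is the density process of $\mathbb{Q}(y)\in\mathcal{Q}$. You have in fact spelled out a few details (the verification that $Z\in\mathcal{Y}$, the positivity of $X(x)$, and the passage $\mathbb{Q}(y)\in\mathcal{Q}$ via Bayes' rule) that the paper leaves implicit.
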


\begin{proof}
  From Theorem~\ref{th:1} we obtain that the dual minimizer $Y(y)$
  exists and satisfies $\Ap{p'}$ and then from Lemma~\ref{lem:2} that
  it is of class $(\mathbf{D})$.  The local martingale property of
  $Y(y)$ follows from Proposition~\ref{prop:2}, if we account for
  Assumption~\ref{as:1} and the martingale property of
  $X(-v'(y))Y(y)$. Thus, $Y(y)$ is a uniformly integrable martingale
  and hence, $Y(y)/y$ is the density process of the optimal martingale
  measure $\mathbb{Q}(y)$.
\end{proof}

We refer the reader to Remark~\ref{rem:1} for a discussion of the
conditions of Theorem~\ref{th:2}.

\begin{Example}
  \label{ex:1}
  In a typical situation, the role of the ``testing'' martingale
  measure $\mathbb{Q}$ is played by the \emph{minimal} martingale
  measure, that is, by the optimal martingale measure for logarithmic
  utility. For a model of stock prices driven by a Brownian motion,
  its density process $Z$ has the form:
  \begin{displaymath}
    Z_t = \mathcal{E}\left(-\lambda \cdot B\right)_t := \exp\left(-\int_0^t \lambda
      dB - \frac12 \int_0^t \abs{\lambda_s}^2 ds \right), \quad t\in
    [0,T],
  \end{displaymath}
  where $B$ is an $N$-dimensional Brownian motion and $\lambda$ is a
  predictable $N$-dimensional process of the \emph{market price of
    risk}. We readily deduce that $Z$ satisfies $\Ap{p}$ for all $p>1$
  if both $\lambda$ and the maturity $T$ are bounded. This fact
  implies the assertions of Theorem~\ref{th:2}, provided that
  inequalities~\eqref{eq:12} hold for some $a\in (0,1)$, $b\geq a$ and
  $C>0$ or, in particular, if the relative risk-aversion of $U$ is
  bounded away from $0$ and $\infty$.
\end{Example}

The following result shows that the key bound~\eqref{eq:13} is the
best possible.

\begin{Theorem}
  \label{th:3}
  Let constants $a$ and $p$ be such that
  \begin{displaymath}
    0<a<1 \mmtext{and} p>\frac1{1-a}.
  \end{displaymath}
  Then there exists a financial market with a continuous stock price
  $S$ such that
  \begin{enumerate}
  \item There is a $\mathbb{Q}\in\mathcal{Q}$ whose density process
    $Z$ satisfies $\Ap{p}$.
  \item In the optimal investment problem with the power utility
    function
    \begin{displaymath}
      U(x) = \frac{x^{1-a}}{1-a}, \quad x>0, 
    \end{displaymath}
    the dual minimizers $Y(y) = y\widehat Y$, $y>0$, are well-defined,
    but are not uniformly integrable martingales. In particular, the
    optimal martingale measure $\widehat{\mathbb{Q}} = \mathbb{Q}(y)$
    does not exist.
  \end{enumerate}
\end{Theorem}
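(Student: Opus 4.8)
The plan is to construct an explicit continuous-price market, most naturally one driven by a single Brownian motion, in which the market price of risk $\lambda$ is a deterministic but \emph{unbounded} function of time approaching the maturity $T$, tuned so that the density process $Z=\mathcal{E}(-\lambda\cdot B)$ of the minimal martingale measure is integrable enough to satisfy $\Ap{p}$ yet just barely fails the stronger integrability needed for the dual minimizer of the power utility to be of class $(\mathbf{D})$. Since for power utility Proposition~\ref{prop:1} tells us that $\widehat Y=Y(1)$ satisfies $\Ap{p'}$ for the critical exponent $p'=1/(1-a)$ if and only if \emph{some} $Y\in\mathcal{Y}$ does, and more usefully that $\widehat Y$ is the pointwise-conditional minimizer of $\cEP{\tau}{(Y_\tau/Y_T)^{1/(p'-1)}}$, the strategy is to compute $\widehat Y$ explicitly in the Markovian/Gaussian model and then show by direct calculation that $\EP{\widehat Y_T^{-1/(p_0-1)}}$ with $p_0=1/(1-a)$ fails or, equivalently, that $\widehat Y$ is not uniformly integrable.

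First I would fix a one-dimensional Brownian filtration and choose $\lambda_t=\lambda(t)$ deterministic with $\int_0^T\lambda(t)^2\,dt<\infty$ but with the tail integral $\int_s^T\lambda(t)^2\,dt$ blowing up at a controlled rate as $s\uparrow T$; a power-type singularity $\lambda(t)^2\sim c/(T-t)$ with an appropriately chosen constant $c$ is the natural candidate. For such deterministic $\lambda$, all the conditional moments $\cEP{\tau}{(Z_\tau/Z_T)^{q}}$ reduce to Gaussian exponential-moment computations of the form $\exp\big((q^2+q)\tfrac12\int_\tau^T\lambda^2\big)$ up to constants, so both the $\Ap{p}$ verification in item~(1) and the failure of integrability in item~(2) become explicit one-line estimates on the tail integral of $\lambda^2$. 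The exponent $q=\tfrac1{p-1}$ enters item~(1) and $q=\tfrac1{p_0-1}=\tfrac{1-a}{a}$ enters item~(2); the whole point is that $p>1/(1-a)$ gives $\tfrac1{p-1}<\tfrac{1-a}{a}$, so the smaller exponent keeps the relevant Gaussian moment finite (yielding $\Ap{p}$) while the larger exponent pushes it to $+\infty$ on a set of positive probability, destroying class $(\mathbf{D})$.

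Concretely, for item~(1) I would verify Definition~\ref{def:1} by bounding $\cEP{\tau}{(Z_\tau/Z_T)^{1/(p-1)}}$ uniformly in $\tau$; with $\lambda$ deterministic this conditional expectation is a deterministic function of $\tau$ that must be shown bounded, which forces the precise relation between the exponent $1/(p-1)$ and the rate of blow-up of $\int_s^T\lambda^2$. For item~(2) I would use the explicit form $\widehat Y_T=Y_T(1)$ of the dual minimizer for power utility. In the complete Brownian market $\mathcal{Q}$ is a singleton, so $\widehat Y=Z$ and the dual minimizer coincides with the minimal-measure density; then $\widehat Y$ is a genuine martingale only if it is uniformly integrable, i.e. if $\EP{Z_T}=1$, while Lemma~\ref{lem:2} shows class $(\mathbf{D})$ would follow from $\Ap{p_0}$, so it suffices to exhibit failure of $\Ap{p_0}$ together with failure of uniform integrability of $Z$ itself. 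I would therefore choose the blow-up rate of $\lambda^2$ so that $Z$ is a strict local martingale (a standard Novikov-boundary phenomenon), which simultaneously gives $\EP{Z_T}<1$, the non-uniform-integrability of $\widehat Y$, and hence the nonexistence of $\widehat{\mathbb{Q}}$.

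\textbf{The main obstacle} I anticipate is the simultaneous tuning of a single parameter (the blow-up rate of $\lambda^2$, or the constant $c$) to satisfy three inequalities at once: finiteness of the $\tfrac1{p-1}$-moment uniformly over stopping times (for $\Ap{p}$), divergence of the $\tfrac{1-a}{a}$-moment (to kill class $(\mathbf{D})$), and the strict-local-martingale property of $Z$ (so that $\widehat Y$ genuinely fails to be a u.i.\ martingale rather than merely failing the $\Ap{}$ bound). The gap $p>1/(1-a)$ is exactly the room needed to separate the first two exponents, but matching the threshold for the strict local martingale property to the same singularity — rather than allowing $Z$ to remain a true martingale, in which case one would instead have to argue nonexistence through failure of the dual attainment or through an incomplete-market construction with a non-singleton $\mathcal{Q}$ — is the delicate quantitative point, and I expect the bulk of the work to be in pinning down that single rate so all three conditions hold for the whole open range of admissible $(a,p)$.
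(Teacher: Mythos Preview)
Your approach has a structural flaw that cannot be repaired by tuning parameters. In a market whose filtration is generated by a single Brownian motion and whose single stock has a well-defined market price of risk $\lambda$, the set $\mathcal{Q}$ is either empty or a singleton. If it is a singleton, the dual minimizer $\widehat Y$ coincides with the density process $Z$ of the unique $\mathbb{Q}$, and a density process is \emph{by definition} a uniformly integrable martingale; item~(2) then fails. If it is empty, item~(1) and the standing hypothesis $\mathcal{Q}\neq\emptyset$ fail. Either way, a complete Brownian model cannot witness Theorem~\ref{th:3}.

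Your concrete proposal also contains an internal inconsistency: you require $\int_0^T\lambda^2<\infty$ but then choose $\lambda(t)^2\sim c/(T-t)$, whose integral diverges; and for \emph{deterministic} $\lambda$ with finite square integral, Novikov is automatic, so $\mathcal{E}(-\lambda\cdot B)$ is always a true martingale --- there is no ``Novikov-boundary phenomenon'' available. The computation $\cEP{\tau}{(Z_\tau/Z_T)^q}=\exp\bigl(\tfrac12(q^2+q)\int_\tau^T\lambda^2\bigr)$ you rely on tends to $1$ as $\tau\uparrow T$ when the integral is finite, so it is bounded; there is no way to make it blow up for one exponent and stay bounded for another.

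The paper's construction avoids this obstruction by making the market \emph{incomplete}: the filtration supports, in addition to the Brownian motion driving $S$, an untraded counting process $N$ with a carefully chosen intensity. The measure $\mathbb{Q}$ (under which $S$ is a martingale) has a UI density $Z$ satisfying $\Ap{p}$, but the dual minimizer $\widehat Y=\mathcal{E}(L)Z$ picks up an extra factor $\mathcal{E}(L)$ coming from the jump part, and it is this factor that makes $\widehat Y$ a strict local martingale with $\EP{\widehat Y_T}<1$. The essential idea you are missing is that separating ``some $Z\in\mathcal{Y}$ is a UI martingale with $\Ap{p}$'' from ``$\widehat Y$ is UI'' requires $\widehat Y\neq Z$, hence genuine incompleteness.
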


The proof of Theorem~\ref{th:3} follows from Proposition~\ref{prop:3}
below, which contains an exact counterexample.

We conclude the section with a couple of corollaries of
Theorem~\ref{th:2} which exploit connections between the $\Ap{p}$
condition and $\bmo$ martingales. Hereafter, we shall refer to
\cite{Kazam:94} and therefore, restrict ourselves to the continuous
case.

\begin{Assumption}
  \label{as:2}
  All local martingales on the filtered probability space $(\Omega,
  \mathcal{F}, (\mathcal{F}_t)_{t\in [0,T]}, \mathbb{P})$ are
  continuous.
\end{Assumption}

From Assumption~\ref{as:2} we deduce that the density process of every
$\mathbb{Q}\in \mathcal{Q}$ is a continuous uniformly integrable
martingale and that the dual minimizer $Y(y)$ is a {continuous} local
martingale.

We recall that a continuous local martingale $M$ with $M_0=0$ belongs
to $\bmo$ if there is a constant $\C>0$ such that
\begin{equation}
  \label{eq:14}
  \cEP{\tau}{\qv{M}_T - \qv{M}_\tau} \leq \C\; \mtext{for every stopping
    time} \tau,
\end{equation}
where $\qv{M}$ is the quadratic variation process for $M$. It is known
that $\bmo$ is a Banach space with the norm
\begin{displaymath}
  \bmonorm{M} \set \inf\descr{\sqrt{\C}>0}{\eqref{eq:14}\;\text{holds for }
    C>0}.  
\end{displaymath}
We also recall that for a continuous local martingale $M$ with
$M_0=0$,
\begin{enumerate}[label=(\roman{*}), ref=(\roman{*})]
\item \label{item:1} The stochastic exponential $\mathcal{E}(M) \set
  e^{M - \qv{M}/2}$ satisfies $\Ap{p}$ for \emph{some} $p>1$ if and
  only if $M\in\bmo$; see Theorem~2.4 in \cite{Kazam:94}.
\item \label{item:2} The stochastic exponentials $\mathcal{E}(M)$ and
  $\mathcal{E}(-M)$ satisfy $\Ap{p}$ for \emph{all} $p>1$ if and only
  the martingale
  \begin{equation}
    \label{eq:15}
    q(M)_t \set \cEP{t}{\qv{M}_T} - \EP{\qv{M}_T},
    \quad t\in[0,T],
  \end{equation}
  is well-defined and belongs to the closure in $\bmonorm{\cdot}$ of
  the space of bounded martingales; see Theorem~3.12 in
  \cite{Kazam:94}.
\end{enumerate}

\begin{Corollary}
  \label{cor:1}
  Let Assumption~\ref{as:2} hold. Suppose that there are constants
  $b\geq 1$ and $\C>0$ such that
  \begin{equation}
    \label{eq:16}
    \frac1{\C} \left(\frac{y}{x}\right) \leq \frac{U'(x)}{U'(y)} \leq
    \C \left(\frac{y}{x}\right)^b, \quad x\leq y, 
  \end{equation}
  and there is a martingale measure $\mathbb{Q}\in\mathcal{Q}$ with
  density process $Z = \mathcal{E}(M)$ with $M\in \bmo$.  Then for
  every $y>0$ the optimal martingale measure $\mathbb{Q}(y)$ exists
  and its density process is given by $Y(y)/y=\mathcal{E}(M(y))$ with
  $M(y)\in \bmo$.
\end{Corollary}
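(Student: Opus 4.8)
The plan is to verify the three hypotheses of Theorem~\ref{th:2} under the assumptions of this corollary, and then to extract the additional $\bmo$ conclusion from the $\Ap{p'}$ property that Theorem~\ref{th:2} already delivers. The first two hypotheses are nearly immediate. Inequalities~\eqref{eq:16} are precisely~\eqref{eq:12} with $a=1$, but Theorem~\ref{th:2} requires $a\in(0,1)$; as pointed out in Remark~\ref{rem:1}, the lower bound in~\eqref{eq:16} with exponent $1$ implies the lower bound in~\eqref{eq:12} with any exponent $a<1$ (after possibly adjusting $\C$, since for $x\le y$ we have $(y/x)^a\le (y/x)$), so I would fix an arbitrary $a\in(0,1)$ and keep the same $b$. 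The associated dual exponent is $p=1/(1-a)$.

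For the third hypothesis, I need the density process $Z=\mathcal{E}(M)$ to satisfy $\Ap{p}$ for this particular $p$. This is exactly where fact~\ref{item:1} enters: since $M\in\bmo$, the stochastic exponential $\mathcal{E}(M)$ satisfies $\Ap{p_0}$ for \emph{some} $p_0>1$. Invoking the monotonicity remark after Definition~\ref{def:1}, $\mathcal{E}(M)$ then satisfies $\Ap{p'}$ for every $p'\ge p_0$. The key maneuver, already flagged in Remark~\ref{rem:1}, is that I am free to choose $a\in(0,1)$ as close to $1$ as I like; choosing $a$ large enough forces $p=1/(1-a)$ to exceed $p_0$, so that $Z$ satisfies the required $\Ap{p}$ condition. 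With all three hypotheses in place, Theorem~\ref{th:2} yields the existence of $\mathbb{Q}(y)$ and the fact that its density process $Y(y)/y$ satisfies $\Ap{p'}$ with $p'=1+b/(1-a)$.

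It remains to identify $Y(y)/y$ as a stochastic exponential $\mathcal{E}(M(y))$ with $M(y)\in\bmo$. Under Assumption~\ref{as:2} the dual minimizer $Y(y)$ is a continuous local martingale, and by the conclusion above it is in fact a strictly positive uniformly integrable martingale (a genuine density process), so it admits a representation $Y(y)/y=\mathcal{E}(M(y))$ for a continuous local martingale $M(y)=\int dY(y)/Y(y)$ with $M(y)_0=0$. To conclude $M(y)\in\bmo$, I would apply fact~\ref{item:1} in the reverse direction: since $\mathcal{E}(M(y))=Y(y)/y$ satisfies $\Ap{p'}$ for some $p'>1$, the equivalence in~\ref{item:1} gives $M(y)\in\bmo$ directly.

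The main obstacle is the bookkeeping around the exponent $a$: I must confirm that a \emph{single} choice of $a\in(0,1)$ simultaneously makes the lower bound in~\eqref{eq:16} imply the $a$-version of~\eqref{eq:12} and makes $p=1/(1-a)$ large enough to clear the unknown threshold $p_0$ coming from $M\in\bmo$. Since increasing $a$ strengthens both the requirement on $p$ favorably (larger $p$) and preserves the lower-bound implication, these are compatible, and any $a$ close enough to $1$ works. The reverse application of~\ref{item:1} to obtain $M(y)\in\bmo$ is clean, since it is stated as an equivalence.
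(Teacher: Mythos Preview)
Your argument is correct and follows the same route as the paper's proof: use fact~\ref{item:1} to obtain $\Ap{p_0}$ for $Z$ from $M\in\bmo$, exploit that~\eqref{eq:16} implies~\eqref{eq:12} for any $a\in(0,1)$ and pick $a$ close enough to $1$ so that $p=1/(1-a)\ge p_0$, apply Theorem~\ref{th:2}, and then invoke~\ref{item:1} in the reverse direction to conclude $M(y)\in\bmo$. Your added justification that $Y(y)/y$ is a strictly positive continuous martingale and hence admits the representation $\mathcal{E}(M(y))$ makes explicit a step the paper leaves implicit.
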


\begin{proof}
  From \ref{item:1} we deduce that $Z$ satisfies $\Ap{p}$ for some
  $p>1$. Clearly, \eqref{eq:16} implies \eqref{eq:12} for every $a\in
  (0,1)$ and in particularly for $a$
  satisfying~\eqref{eq:13}. Theorem~\ref{th:2} then implies that
  $Y(y)/y$ satisfies $\Ap{p'}$ for some $p'>1$ and another
  application of~\ref{item:1} yields the result.
\end{proof}

We notice that by~\ref{item:1} and Theorem~\ref{th:3} the power $1$ in
the first inequality of~\eqref{eq:16} cannot be replaced with any
$a\in (0,1)$, in order to guarantee that the optimal martingale
measure $\mathbb{Q}(y)$ exists.

\begin{Corollary}
  \label{cor:2}
  Let Assumption~\ref{as:2} hold and let inequality \eqref{eq:12} be
  satisfied for some constants $0<a<1$, $b\geq a$ and $\C>0$. Suppose
  also that there is a martingale measure $\mathbb{Q}\in\mathcal{Q}$
  whose density process $Z = \mathcal{E}(M)$ is such that the
  martingale $q(M)$ in~\eqref{eq:15} is well-defined and belongs to
  the closure in $\bmonorm{\cdot}$ of the space of bounded
  martingales.  Then for every $y>0$ the optimal martingale measure
  $\mathbb{Q}(y)$ exists and its density process is given by
  $Y(y)/y=\mathcal{E}(M(y))$ with $M(y)\in \bmo$.
\end{Corollary}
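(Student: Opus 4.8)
The plan is to mirror the proof of Corollary~\ref{cor:1}, replacing property~\ref{item:1} in the first step by the stronger property~\ref{item:2}. The reason the hypothesis must be phrased through $q(M)$ rather than through $M\in\bmo$ is that the exponent $a$ in~\eqref{eq:12} is now fixed, so I need the $\Ap{p}$ condition for $Z$ with the \emph{exact} power $p=1/(1-a)$ of~\eqref{eq:13}, not merely for some $p>1$.

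First I would apply~\ref{item:2}. Since $q(M)$ is well-defined and belongs to the $\bmonorm{\cdot}$-closure of the bounded martingales, both $\mathcal{E}(M)$ and $\mathcal{E}(-M)$ satisfy $\Ap{p}$ for \emph{all} $p>1$; in particular $Z=\mathcal{E}(M)$ satisfies $\Ap{p}$ with $p=1/(1-a)$. Combined with~\eqref{eq:12}, this furnishes exactly the hypotheses of Theorem~\ref{th:2}, which I would invoke to conclude that for every $y>0$ the optimal martingale measure $\mathbb{Q}(y)$ exists and its density process $Y(y)/y$ satisfies $\Ap{p'}$ with $p'=1+b/(1-a)>1$.

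It then remains to pass from $\Ap{p'}$ back to $\bmo$. By Assumption~\ref{as:2} the density process $Y(y)/y$ is a strictly positive continuous uniformly integrable martingale equal to $1$ at time $0$, so its stochastic logarithm $M(y)$ is a well-defined continuous local martingale with $M(y)_0=0$ and $Y(y)/y=\mathcal{E}(M(y))$. Since $\mathcal{E}(M(y))$ satisfies $\Ap{p'}$ for some $p'>1$, property~\ref{item:1} applied in the converse direction gives $M(y)\in\bmo$, as claimed.

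I do not expect a genuine obstacle: every step reuses a result already available. The one point worth flagging is precisely the contrast with Corollary~\ref{cor:1}. There the power $1$ in~\eqref{eq:16} allows $a$ to be chosen freely close to $1$ (cf. Remark~\ref{rem:1}), so $p=1/(1-a)$ can be made large enough that the $\Ap{p}$ condition supplied by~\ref{item:1} suffices; here $a$ is given, and only the ``for all $p>1$'' statement~\ref{item:2} can guarantee $\Ap{p}$ at the required power.
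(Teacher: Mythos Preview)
Your proposal is correct and follows essentially the same route as the paper, which dispatches the corollary in one line (``follows directly from~\ref{item:2} and Theorem~\ref{th:2}''). You simply spell out the details the paper leaves implicit, including the final appeal to~\ref{item:1} to pass from $\Ap{p'}$ back to $M(y)\in\bmo$, which is the same closing step as in Corollary~\ref{cor:1}.
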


\begin{proof}
  The result follows directly from \ref{item:2} and
  Theorem~\ref{th:2}.
\end{proof}

\section{Counterexample}
\label{sec:counterexample}

In this section we construct an example of financial market satisfying
the conditions of Theorem~\ref{th:3}.  For a semimartingale $R$, we
denote by $\mathcal{E}(R)$ its stochastic exponential, that is, the
solution of the linear equation:
\begin{displaymath}
  d\mathcal{E}(R) = \mathcal{E}(R)_{-} dR, \quad \mathcal{E}(R)_0 =
  1. 
\end{displaymath}

We start with an auxiliary filtered probability space $(\Omega,
\mathcal{F}, (\mathcal{F}_t)_{t\geq 0}, \mathbb{Q})$, which supports a
Brownian motion $B=(B_t)$ and a counting process $N=(N_t)$ with the
stochastic intensity $\lambda = (\lambda_t)$ given in~\eqref{eq:19}
below; $B_0=N_0=0$.  We define the process
\begin{displaymath}
  S_t \set \mathcal{E}(B)_t = e^{B_t - t/2}, \quad t\geq 0,
\end{displaymath}
and the stopping times
\begin{align*}
  T_1 & \set \inf\descr{t\geq 0}{S_t=2},  \\
  T_2 & \set \inf\descr{t\geq 0}{N_t=1},  \\
  T & \set T_1 \wedge T_2 = \min(T_1,T_2).
\end{align*}

We fix constants $a$ and $p$ such that
\begin{equation}
  \label{eq:17}
  0<a<1 \mmtext{and} p>\frac1{1-a}
\end{equation}
and choose a constant $b$ such that
\begin{equation}
  \label{eq:18}
  a < b < \frac1{q} \mmtext{and}  \gamma \leq \frac12 \delta(1-\delta),
\end{equation}
where
\begin{align*}
  q &\set \frac{p}{p-1} < \frac1a, \\
  \delta &\set b-a >0, \\
  \gamma &\set \frac{b}2(1-qb) > 0.
\end{align*}

With this notation, we define the stochastic intensity $\lambda =
(\lambda_t)$ as
\begin{equation}
  \label{eq:19}
  \lambda_t \set  \frac{\gamma}{1-(S_t/2)^\delta}
  \ind{t<T_1} + \gamma \ind{t\geq T_1}, \quad t\geq 0.  
\end{equation}
Recall that $N - \int \lambda dt$ is a local martingale under
$\mathbb{Q}$.

Finally, we introduce a probability measure $\mathbb{P} \ll
\mathbb{Q}$ with the density
\begin{displaymath}
  \frac{d\mathbb{P}}{d\mathbb{Q}} = \frac{1}{\EQ{S_T^b}} S_T^b.  
\end{displaymath}
Notice that
\begin{equation}
  \label{eq:20}
  \braces{ \frac{d\mathbb{P}}{d\mathbb{Q}} = 0} = \braces{S_T=0} 
  = \braces{\mathcal{E}(B)_T = 0} = \braces{T=\infty}
\end{equation}
and therefore, the stopping time $T$ is finite under $\mathbb{P}$:
\begin{displaymath}
  \mathbb{P}(T<\infty) = 1.
\end{displaymath}  

\begin{Proposition}
  \label{prop:3}
  Assume~\eqref{eq:17} and~\eqref{eq:18} and consider the financial
  market with the price process $S$ and the maturity $T$ defined on
  the filtered probability space $(\Omega, \mathcal{F}_T,
  (\mathcal{F}_t)_{t\in [0,T]}, \mathbb{P})$. Then
  \begin{enumerate}
  \item The probability measure $\mathbb{Q}$ belongs to $\mathcal{Q}$
    and the density process $Z$ of $\mathbb{Q}$ with respect to
    $\mathbb{P}$ satisfies $\Ap{p}$.
  \item In the optimal investment problem with the power utility
    function
    \begin{equation}
      \label{eq:21}
      U(x) = \frac{x^{1-a}}{1-a}, \quad x>0, 
    \end{equation}
    the dual minimizers $Y(y) = y\widehat Y$, $y>0$, are well-defined
    but are not uniformly integrable martingales. In particular, the
    optimal martingale measure $\widehat{\mathbb{Q}} = \mathbb{Q}(y)$
    does not exist.
  \end{enumerate}
\end{Proposition}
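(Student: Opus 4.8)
My plan is to treat the two assertions separately, reducing everything to one-dimensional computations for the $\mathbb{Q}$-martingale $S$ on the interval $(0,2]$.

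For Part~1, I would first check $\mathbb{Q}\sim\mathbb{P}$. Since $\mathbb{P}\ll\mathbb{Q}$ by construction, it suffices that $\mathbb{Q}(T<\infty)=1$, i.e.\ $\mathbb{Q}(S_T=0)=0$: on $\{T_1=\infty\}$ one has $S_t<2$ for all $t$, hence $\lambda_t\ge\gamma>0$ and $\int_0^\infty\lambda_t\,dt=\infty$, which forces a jump, so $T_2<\infty$ there, while on $\{T_1<\infty\}$ trivially $T\le T_1<\infty$. As $S^T\le 2$ it is a genuine $\mathbb{Q}$-martingale, and every $X\in\mathcal{X}$, being a non-negative stochastic integral of $S$, is a $\mathbb{Q}$-local martingale (Ansel--Stricker); thus $\mathbb{Q}\in\mathcal{Q}$. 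For the $\Ap{p}$ property of $Z_t=c\,/\,\cEQ{t}{S_T^b}$, $c=\EQ{S_T^b}$, the key move is a change of measure: with $q=p/(p-1)$, the identity $Z_\tau/Z_T=S_T^b/\cEQ{\tau}{S_T^b}$ and the abstract Bayes rule give
\[
  \cEP{\tau}{\left(\frac{Z_\tau}{Z_T}\right)^{1/(p-1)}}
  =\frac{\cEQ{\tau}{S_T^{bq}}}{\left(\cEQ{\tau}{S_T^b}\right)^{q}} .
\]
Since $bq<1$, the numerator is at most $S_\tau^{bq}$ on $\{\tau<T\}$, so it remains to bound $\cEQ{\tau}{S_T^b}$ below by $cS_\tau^b$. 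By the strong Markov property this equals $h(S_\tau)$ with $h(s)=\mathbb{E}^{\mathbb{Q}}_s[S_T^b]$ solving $\tfrac12 s^2h''=\lambda(s)(h-s^b)$ on $(0,2)$, $h(2)=2^b$; I would show that $\rho(s):=h(s)/s^b$ is continuous and strictly positive on $[0,2]$, with $\rho(2^-)=1$ and $\rho(0^+)=\gamma/\bigl(\gamma+\tfrac12 b(1-b)\bigr)>0$, so $\rho\ge c>0$ and $\Ap{p}$ holds with $C=2^{b(q-1)}c^{-q}$.

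For Part~2, set $p_u:=1/(1-a)$ and $\beta:=1/(p_u-1)=(1-a)/a$. Existence of $\widehat Y$ follows from Proposition~\ref{prop:1}: taking $Y=Z$ and changing measure, $\EP{Z_T^{-\beta}}=c^{-1-\beta}\EQ{S_T^{b/a}}\le c^{-1-\beta}2^{b/a}<\infty$ because $S_T\le 2$. To defeat the martingale property I would show that \emph{no} element of $\mathcal{Y}$ satisfies $\Ap{p_u}$. The estimate is uniform in $Y$: the supermartingale property of $YS^T$ and $S_{T_1}=2$ give $\cEP{\tau}{Y_T\ind{T=T_1}}\le Y_\tau S_\tau/2$, and conditional H\"older applied to $1=\ind{T=T_1}$ with exponents $1+\beta$ and $(1+\beta)/\beta$ yields, after the $Y_\tau$ cancel,
\[
  \cEP{\tau}{\left(\frac{Y_\tau}{Y_T}\right)^{\beta}}
  \ge\frac{\mathbb{P}(T=T_1\mid\mathcal{F}_\tau)^{1/a}}{(S_\tau/2)^{\beta}} .
\]
Under $\mathbb{P}$ the drift and killing rate of $S$ are read off from $Z$, so $\phi(s):=\mathbb{P}_s(T=T_1)$ solves a linear ODE whose indicial analysis at $0$ gives $\phi(s)\sim\mathrm{const}\cdot s^{m-b}$ with $m=\tfrac12\bigl(1+\sqrt{1+8\gamma}\bigr)$. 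The standing hypothesis $\gamma\le\tfrac12\delta(1-\delta)$ forces $m-b<1-a$, so the right-hand side blows up as $S_\tau\downarrow 0$; taking $\tau$ to be hitting times of small levels shows $\widehat Y\notin\Ap{p_u}$.

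Finally I would convert this into non-integrability. As $S$ is continuous, Assumption~\ref{as:1} holds and Proposition~\ref{prop:2} makes $\widehat Y$ a local martingale; being also a supermartingale, it is a uniformly integrable martingale iff $\EP{\widehat Y_T}=1$. Since $\widehat X\,\widehat Y$ is a uniformly integrable martingale and $\widehat X_T=\widehat Y_T^{-1/a}$, one has the identity $\cEP{\tau}{(\widehat Y_\tau/\widehat Y_T)^{\beta}}=\widehat Y_\tau^{1/a}\widehat X_\tau$, so the blow-up above says precisely that the Markovian value function $\Phi$, through which $\widehat Y_\tau^{1/a}\widehat X_\tau=k\,\Phi(S_\tau)^{1/a}$, is unbounded as $s\downarrow0$. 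The main obstacle is this last passage: one must carry out the Markovian verification---solving the nonlinear HJB ODE for $\Phi$ on $(0,2)$ with $\Phi(2)=1$, identifying $\widehat X$ and $\widehat Y$, and analysing the boundary at $s=2$---to conclude that existence of $\widehat{\mathbb{Q}}$ would force $\Phi$ bounded (equivalently $\widehat Y\in\Ap{p_u}$), a contradiction, so that $\widehat Y$ loses mass and $\widehat{\mathbb{Q}}=\mathbb{Q}(y)$ does not exist. It is exactly in this analysis that the tuning $\gamma=\tfrac b2(1-qb)$ and $\gamma\le\tfrac12\delta(1-\delta)$ (with $\tfrac12\delta(1-\delta)$ the $\mathbb{Q}$-generator eigenvalue of $(s/2)^\delta$) makes $Z$ satisfy $\Ap{p}$ while the critical $\Ap{p_u}$ fails.
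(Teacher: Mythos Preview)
For Part~1 your ODE approach is a valid alternative, though the paper obtains the required lower bound $\cEQ{\tau}{S_T^b}\geq c\,S_\tau^b$ without Markov or ODE machinery (Lemma~\ref{lem:5}): with $\theta=\tfrac12\epsilon(1-\epsilon)$ and $L_t=\int_0^t(\theta/\lambda_r)(dN_r-\lambda_r\,dr)$, the process $M_t=\mathcal{E}(\epsilon B)_t\,\mathcal{E}(L)_t$ is a bounded $\mathbb{Q}$-martingale on $[0,T]$ satisfying $S_t^\epsilon\leq M_t\leq(1+\theta/\gamma)S_t^\epsilon$, which immediately gives both inequalities you need.

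Part~2 has a genuine gap. Your conditional H\"older bound and indicial analysis do show that $\widehat Y$ fails $\Ap{p_u}$, but this does \emph{not} imply that $\widehat Y$ fails to be uniformly integrable: Lemma~\ref{lem:2} only goes one way. The bridging claim ``existence of $\widehat{\mathbb{Q}}$ would force $\Phi$ bounded'' is precisely the point at issue, and you give no argument for it; the HJB verification you propose would at best identify $\widehat X$ and $\widehat Y$ explicitly, after which one may as well compute $\EP{\widehat Y_T}$ directly rather than return to the $\Ap{p_u}$ detour. That is exactly what the paper does, and the key observation you are missing is the role of the counting process~$N$: it is not merely a device to terminate $S$, it supplies a $\mathbb{Q}$-local martingale deflator orthogonal to~$B$. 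Concretely, with $L_t=\int_0^t(\gamma/\lambda_r)(\lambda_r\,dr-dN_r)$ one computes $\mathcal{E}(L)_T=e^{\gamma T}(S_T/2)^\delta$, since $1-\gamma/\lambda_T=(S_T/2)^\delta$ on $\{T=T_2\}$ and $S_T=2$ on $\{T=T_1\}$. One then verifies via Lemma~\ref{lem:1} that $\widehat X=S$ and $\widehat Y=\mathcal{E}(L)Z$ are optimal (so $\widehat Y_T=\text{const}\cdot S_T^{-a}$), and the tuning $\gamma\leq\tfrac12\delta(1-\delta)$ now gives the conclusion in one line:
\[
  \EP{\widehat Y_T}=\EQ{\mathcal{E}(L)_T}
  =2^{-\delta}\EQ{e^{\gamma T}\bigl(\mathcal{E}(B)_T\bigr)^\delta}
  \leq 2^{-\delta}\EQ{\mathcal{E}(\delta B)_T}
  \leq 2^{-\delta}<1.
\]
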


The proof is divided into a series of lemmas.

\begin{Lemma}
  \label{lem:4}
  The stopping time $T$ is finite under $\mathbb{Q}$ and the
  probability measures $\mathbb{P}$ and $\mathbb{Q}$ are equivalent.
\end{Lemma}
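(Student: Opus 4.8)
The plan is to establish the two claims in turn: the $\mathbb{Q}$-a.s.\ finiteness of $T$ is the substantive point, and the equivalence $\mathbb{P}\sim\mathbb{Q}$ will then drop out of~\eqref{eq:20}. For finiteness it suffices to prove $\mathbb{Q}(T_2<\infty)=1$, since $T=T_1\wedge T_2\le T_2$. The elementary but decisive observation is that the intensity is bounded below by $\gamma>0$ everywhere: for $t<T_1$ continuity of $S$ together with $S_0=1$ forces $0<S_t<2$, so $(S_t/2)^\delta\in(0,1)$ and hence $\lambda_t=\gamma/(1-(S_t/2)^\delta)\ge\gamma$ by~\eqref{eq:19}, while for $t\ge T_1$ we simply have $\lambda_t=\gamma$. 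Consequently the compensator $\Lambda_t\set\int_0^t\lambda_s\,ds$ satisfies $\Lambda_t\ge\gamma t\to\infty$.

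Next I would turn this divergence into the almost-sure occurrence of the first jump. Recall that $N-\Lambda$ is a local martingale under $\mathbb{Q}$. Stopping at $T_2$, the process $N_{t\wedge T_2}=\ind{T_2\le t}$ is bounded by $1$, so a routine localization together with monotone convergence gives $\mathbb{Q}(T_2\le t)=\EQ{\Lambda_{t\wedge T_2}}$ for every $t$. Letting $t\to\infty$ and using monotone convergence once more yields $\mathbb{Q}(T_2<\infty)=\EQ{\Lambda_{T_2}}$, where on $\{T_2=\infty\}$ we read $\Lambda_{T_2}=\Lambda_\infty=\infty$. Since the left-hand side is at most $1$, the event $\{T_2=\infty\}$ must be $\mathbb{Q}$-null, and therefore $T\le T_2<\infty$ $\mathbb{Q}$-a.s. (Equivalently, one checks by It\^o's formula that $\ind{T_2>t}\,e^{\Lambda_{t\wedge T_2}}$ is a nonnegative local, hence super-, martingale equal to $1$ at time $0$, and applies Fatou's lemma on $\{T_2=\infty\}$, where $e^{\Lambda_t}\to\infty$.)

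Finally, for the equivalence, note that $\mathbb{P}\ll\mathbb{Q}$ holds by construction, so it only remains to show $\mathbb{Q}\ll\mathbb{P}$, i.e.\ that the density $\frac{d\mathbb{P}}{d\mathbb{Q}}=S_T^b/\EQ{S_T^b}$ is strictly positive $\mathbb{Q}$-a.s. By~\eqref{eq:20} the zero set of this density is exactly $\{S_T=0\}=\{T=\infty\}$, which the previous step shows to be $\mathbb{Q}$-null. Hence the density is positive $\mathbb{Q}$-a.s., giving $\mathbb{Q}\ll\mathbb{P}$ and thus $\mathbb{P}\sim\mathbb{Q}$.

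The main obstacle is precisely the point-process step of the second paragraph: converting $\Lambda_\infty=\infty$ into $\mathbb{Q}(T_2<\infty)=1$. The delicate part is that $N-\Lambda$ is only known to be a \emph{local} martingale and the integrability of $\Lambda_{t\wedge T_2}$ is not given a priori; the localization-plus-monotone-convergence argument (or, alternatively, the supermartingale-and-Fatou argument) is what makes the identity $\mathbb{Q}(T_2\le t)=\EQ{\Lambda_{t\wedge T_2}}$ rigorous.
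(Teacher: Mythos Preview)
Your proof is correct and follows the same overall strategy as the paper: reduce the equivalence $\mathbb{P}\sim\mathbb{Q}$ via~\eqref{eq:20} to $\mathbb{Q}(T<\infty)=1$, observe that the intensity in~\eqref{eq:19} is bounded below by $\gamma>0$, and conclude that $T_2$ (hence $T$) is $\mathbb{Q}$-a.s.\ finite. The only difference is in the last step: the paper simply quotes the standard exponential tail bound $\mathbb{Q}(T_2>t)\le e^{-\gamma t}\to 0$, whereas you derive $\mathbb{Q}(T_2\le t)=\EQ{\Lambda_{t\wedge T_2}}$ via localization and monotone convergence and then argue by contradiction from $\Lambda_\infty=\infty$; your version is more self-contained, the paper's is shorter.
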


\begin{proof}
  In view of~\eqref{eq:20}, we only have to show that
  \begin{displaymath}
    \mathbb{Q}(T<\infty)=1.  
  \end{displaymath}
  Indeed, by~\eqref{eq:19}, the intensity $\lambda$ is bounded below
  by $\gamma>0$ and hence,
  \begin{displaymath}
    \mathbb{Q}(T>t) \leq  \mathbb{Q}(T_2>t) \leq e^{-\gamma t} \to 0, 
    \quad t\to \infty.  
  \end{displaymath}
\end{proof}

From the construction of the model and Lemma~\ref{lem:4} we deduce
that $\mathbb{Q}\in \mathcal{Q}$. To show that the density process $Z$
of $\mathbb{Q}$ with respect to $\mathbb{P}$ satisfies $\Ap{p}$ we
need the following estimate.

\begin{Lemma}
  \label{lem:5}
  Let $0<\epsilon<1$ be a constant and $\tau$ be a stopping time. Then
  \begin{displaymath}
    \cEQ{\tau}{S^\epsilon_T} \leq S^\epsilon_\tau \leq \left(1 +
      \frac{\epsilon(1-\epsilon)}{2\gamma}\right) 
    \cEQ{\tau}{S^\epsilon_T}.
  \end{displaymath}
\end{Lemma}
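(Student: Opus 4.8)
The plan is to exploit two features of the continuous process $S^\epsilon$: it is a $\mathbb{Q}$-supermartingale, which yields the left inequality almost for free, and its supermartingale drift can be controlled by the terminal value $S^\epsilon_T$ using the lower bound $\lambda\geq\gamma$ on the jump intensity, which yields the right inequality.

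First I would compute the dynamics of $S^\epsilon$. Since $S=\mathcal{E}(B)$ is continuous (it is unaffected by the counting process $N$) and solves $dS=S\,dB$, It\^o's formula gives $dS^\epsilon_t=\epsilon S^\epsilon_t\,dB_t-\tfrac12\epsilon(1-\epsilon)S^\epsilon_t\,dt$. Because $S_{t\wedge T}\leq S_{t\wedge T_1}\leq 2$, the stopped process $S^\epsilon_{\cdot\wedge T}$ is bounded by $2^\epsilon$; moreover $\EQ{T}<\infty$ by the exponential tail estimate in Lemma~\ref{lem:4}, so $\int_0^{\cdot\wedge T}\epsilon S^\epsilon\,dB$ is a genuine ($L^2$-bounded) $\mathbb{Q}$-martingale. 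Taking $\cEQ{\tau}{\cdot}$ of the decomposition on $[\tau,T]$ therefore yields
\[
  \cEQ{\tau}{S^\epsilon_T}=S^\epsilon_\tau-\tfrac12\epsilon(1-\epsilon)\,\cEQ{\tau}{\int_\tau^T S^\epsilon_u\,du}.
\]
Since $0<\epsilon<1$, the drift term is nonnegative, and the left inequality follows at once.

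The substance is the right inequality, for which I would prove the key estimate $\gamma\,\cEQ{\tau}{\int_\tau^T S^\epsilon_u\,du}\leq\cEQ{\tau}{S^\epsilon_T}$. The idea is that $\int\lambda\,dt$ is the $\mathbb{Q}$-compensator of $N$, so for the nonnegative predictable integrand $S^\epsilon$ the dual predictable projection identity gives $\cEQ{\tau}{\int_{(\tau,T]}S^\epsilon_u\,dN_u}=\cEQ{\tau}{\int_\tau^T S^\epsilon_u\lambda_u\,du}$ as an equality in $[0,\infty]$, so that no a priori integrability is required. On $[0,T]$ the process $N$ can jump only at $T_2$, and $\{\tau<T_2\leq T\}=\{\tau<T=T_2\}$ because $T=T_1\wedge T_2\leq T_2$; on this event $S^\epsilon_{T_2}=S^\epsilon_T$. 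Hence the left-hand side equals $\cEQ{\tau}{S^\epsilon_T\ind{\tau<T_2\leq T}}\leq\cEQ{\tau}{S^\epsilon_T}$, which in particular shows the right-hand side is finite. Using $\lambda\geq\gamma$ from~\eqref{eq:19} then gives $\gamma\,\cEQ{\tau}{\int_\tau^T S^\epsilon_u\,du}\leq\cEQ{\tau}{\int_\tau^T S^\epsilon_u\lambda_u\,du}\leq\cEQ{\tau}{S^\epsilon_T}$. Substituting this into the displayed decomposition and rearranging produces $S^\epsilon_\tau\leq\bigl(1+\tfrac{\epsilon(1-\epsilon)}{2\gamma}\bigr)\cEQ{\tau}{S^\epsilon_T}$.

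The main obstacle to watch is integrability: the intensity $\lambda_t$ blows up as $t\uparrow T_1$, since $(S_t/2)^\delta\uparrow 1$, so $\int_0^T\lambda\,du$ need not be bounded and one cannot blindly invoke the martingale $N-\int\lambda\,du$ against an unbounded integrand. The device that resolves this is to phrase the compensator identity for the nonnegative integrand $S^\epsilon$ as an equality valued in $[0,\infty]$ and to read off finiteness from the bounded terminal value $S^\epsilon_T\leq 2^\epsilon$, rather than from any direct control of $\int\lambda\,du$. A secondary point requiring care is the true (not merely local) martingale property of the Brownian integral up to the unbounded horizon $T$, which I would settle from $S^\epsilon_{\cdot\wedge T}\leq 2^\epsilon$ together with $\EQ{T}<\infty$.
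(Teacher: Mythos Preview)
Your argument is correct, and it takes a genuinely different route from the paper's. The paper does not use the additive Doob--Meyer decomposition of $S^\epsilon$ at all; instead it writes $S^\epsilon_t=\mathcal{E}(\epsilon B)_t e^{-\theta t}$ with $\theta=\tfrac12\epsilon(1-\epsilon)$ and then \emph{multiplicatively} corrects this by the stochastic exponential of the purely discontinuous local martingale $L_t=\int_0^t\frac{\theta}{\lambda_r}(dN_r-\lambda_r\,dr)$, setting $M=\mathcal{E}(\epsilon B)\,\mathcal{E}(L)$. One checks that $M_t=S^\epsilon_t$ for $t<T_2$ and $M_T=(1+\theta/\lambda_T)S^\epsilon_T$ on $\{T=T_2\}$; since $\lambda\geq\gamma$ and $S\leq 2$, the process $M$ is a bounded $\mathbb{Q}$-martingale with $S^\epsilon\leq M\leq(1+\theta/\gamma)S^\epsilon$, and both inequalities drop out by taking $\cEQ{\tau}{\cdot}$.

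The two proofs rest on the same two facts---$\lambda\geq\gamma$ and the compensator relation between $N$ and $\int\lambda\,dt$---but package them differently. Your approach is more transparent and easier to discover: once you write down $\cEQ{\tau}{S^\epsilon_T}=S^\epsilon_\tau-\theta\,\cEQ{\tau}{\int_\tau^T S^\epsilon_u\,du}$, bounding the drift term via the compensator identity is the natural move, and your care with the $[0,\infty]$-valued equality neatly sidesteps the blow-up of $\lambda$ near $T_1$. The paper's construction is slicker in that it produces a single bounded martingale that does all the work at once, but the choice of $L$ looks somewhat unmotivated unless one has already done a computation like yours. Both yield the identical constant $1+\theta/\gamma$.
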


\begin{proof}
  We denote
  \begin{displaymath}
    \theta = \frac12 \epsilon(1-\epsilon)
  \end{displaymath}
  and deduce that
  \begin{displaymath}
    S^\epsilon_t = \mathcal{E}(B)^\epsilon_t = \mathcal{E}(\epsilon B)_t
    e^{-\theta t}, \quad t\in [0,T]. 
  \end{displaymath}
  In particular, $S^\epsilon$ is a $\mathbb{Q}$-supermartingale, and
  the first inequality in the statement of the lemma follows.

  To verify the second inequality, we define local martingales $L$ and
  $M$ under $\mathbb{Q}$ as
  \begin{align*}
    L_t &= \int_0^t \frac{\theta}{\lambda_r} (dN_r -
    \lambda_r dr),  \\
    M_t &= \mathcal{E}(\epsilon B)_t \mathcal{E}(L)_t,
  \end{align*}
  and observe that
  \begin{align*}
    M_t &=  S^\epsilon_t, \quad t\leq T, \; t<T_2, \\
    M_T &= \left(1 + \frac{\theta}{\lambda_T}\right) S^\epsilon_T,
    \quad T=T_2.
  \end{align*}
  Since $\lambda \geq \gamma$, we obtain that
  \begin{displaymath}
    S^\epsilon_t \leq M_t \leq \left(1 + \frac{\theta}{\gamma}\right)
    S^\epsilon_t, \quad t\in [0,T].   
  \end{displaymath}
  As $S\leq 2$, we deduce that $M$ is a bounded
  $\mathbb{Q}$-martingale and the result readily follows.
\end{proof}

\begin{Lemma}
  \label{lem:6}
  The density process $Z$ of $\mathbb{Q}$ with respect to $\mathbb{P}$
  satisfies $\Ap{p}$.
\end{Lemma}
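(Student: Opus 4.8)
The plan is to compute the ratio $Z_\tau/Z_T$ explicitly, reduce the $\mathbb{P}$-conditional expectation in Definition~\ref{def:1} to $\mathbb{Q}$-moments of powers of $S_T$ via the abstract Bayes formula, and then control those moments with Lemma~\ref{lem:5}. Since $\mathbb{P}\sim\mathbb{Q}$ by Lemma~\ref{lem:4}, the density process of $\mathbb{P}$ with respect to $\mathbb{Q}$ is $D_t=\cEQ{t}{S_T^b}/\EQ{S_T^b}$, and the density process $Z$ of $\mathbb{Q}$ with respect to $\mathbb{P}$ is its reciprocal, $Z_t=1/D_t=\EQ{S_T^b}/\cEQ{t}{S_T^b}$. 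In particular $Z_T=\EQ{S_T^b}/S_T^b>0$ (using $S_T>0$ under $\mathbb{P}$ by~\eqref{eq:20}), so that
\begin{displaymath}
  \frac{Z_\tau}{Z_T} = \frac{S_T^b}{\cEQ{\tau}{S_T^b}}.
\end{displaymath}

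Next I would fix a stopping time $\tau$ and apply Bayes' rule $\cEP{\tau}{\xi}=\cEQ{\tau}{\xi S_T^b}/\cEQ{\tau}{S_T^b}$ (the normalizing constant $\EQ{S_T^b}$ cancels) to the random variable $\xi=(Z_\tau/Z_T)^{1/(p-1)}$. Since $\cEQ{\tau}{S_T^b}$ is $\mathcal{F}_\tau$-measurable and may be pulled out, and since $\tfrac1{p-1}=q-1$, $b+\tfrac{b}{p-1}=bq$, and $1+\tfrac1{p-1}=q$, the algebra collapses to
\begin{displaymath}
  \cEP{\tau}{\left(\frac{Z_\tau}{Z_T}\right)^{\frac1{p-1}}}
  = \frac{\cEQ{\tau}{S_T^{bq}}}{\left(\cEQ{\tau}{S_T^b}\right)^{q}}.
\end{displaymath}

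The final step is to bound this ratio using Lemma~\ref{lem:5} with two different exponents. The constraint $b<1/q$ from~\eqref{eq:18} (together with $q>1$, which holds as $p>1$) guarantees $0<b<1$ and $0<bq<1$, so both exponents lie in $(0,1)$ and Lemma~\ref{lem:5} applies. The first inequality of that lemma with $\epsilon=bq$ gives $\cEQ{\tau}{S_T^{bq}}\leq S_\tau^{bq}$, while the second inequality with $\epsilon=b$ gives $\cEQ{\tau}{S_T^b}\geq\left(1+\tfrac{b(1-b)}{2\gamma}\right)^{-1}S_\tau^b$. Raising the latter to the power $q$ and dividing, the $S_\tau^{bq}$ factors cancel and we obtain
\begin{displaymath}
  \cEP{\tau}{\left(\frac{Z_\tau}{Z_T}\right)^{\frac1{p-1}}}
  \leq \left(1+\frac{b(1-b)}{2\gamma}\right)^{q},
\end{displaymath}
which is the $\Ap{p}$ bound with constant $\C=\left(1+\tfrac{b(1-b)}{2\gamma}\right)^{q}$ independent of $\tau$.

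I expect no serious difficulty in the estimates themselves, as Lemma~\ref{lem:5} does the analytic work; the main thing to get right is the bookkeeping of the change of measure, namely that $Z$ is the reciprocal of the $\mathbb{P}$-density process of $\mathbb{Q}$ and that Bayes' rule correctly turns the $\mathbb{P}$-conditional expectation into the quotient of $\mathbb{Q}$-moments above. The one genuine constraint to verify is that both exponents $b$ and $bq$ fall in $(0,1)$ so that Lemma~\ref{lem:5} is admissible, and this is exactly where the hypothesis $b<1/q$ in~\eqref{eq:18} is used.
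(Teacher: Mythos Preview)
Your proposal is correct and is essentially the paper's own argument: both pass to $\mathbb{Q}$ via the change-of-measure identity (the paper writes it as $\cEP{\tau}{(Z_\tau/Z_T)^{1/(p-1)}}=\cEQ{\tau}{(\widetilde Z_T/\widetilde Z_\tau)^{q}}$ with $\widetilde Z=1/Z$, which unwinds to exactly your ratio $\cEQ{\tau}{S_T^{bq}}/(\cEQ{\tau}{S_T^b})^{q}$), and then bound numerator and denominator with Lemma~\ref{lem:5} applied to $\epsilon=bq$ and $\epsilon=b$, using $0<b<bq<1$ from~\eqref{eq:18}. The final constant you obtain, $(1+b(1-b)/(2\gamma))^{q}$, is the one implicit in the paper's estimate.
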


\begin{proof}
  Fix a stopping time $\tau$. As $\mathbb{Q}\sim \mathbb{P}$, we have
  \begin{align*}
    \cEP{\tau}{\left(\frac{Z_\tau}{Z_T}\right)^{\frac1{p-1}}} &=
    \cEQ{\tau}{\left(\frac{Z_\tau}{Z_T}\right)^{1+\frac1{p-1}}} =
    \cEQ{\tau}{\left(\frac{Z_\tau}{Z_T}\right)^{q}} \\
    &= \cEQ{\tau}{\left(\frac{\widetilde Z_T}{\widetilde
          Z_\tau}\right)^{q}},
  \end{align*}
  where $\widetilde Z = 1/Z$ is the density process of $\mathbb{P}$
  with respect to $\mathbb{Q}$.

  Recall that
  \begin{displaymath}
    \widetilde Z_T = C S_T^b, 
  \end{displaymath}
  for some constant $\C>0$. Since $0<b<bq<1$, Lemma~\ref{lem:5} yields
  that
  \begin{align*}
    \widetilde Z_\tau = \cEQ{\tau}{\widetilde Z_T} &= \C
    \cEQ{\tau}{S^b_T}
    \geq \C \left(1 + \frac{b(1-b)}{2\gamma}\right)^{-1} S^b_\tau, \\
    \cEQ{\tau}{\widetilde{Z}^q_T} &= \C^q \cEQ{\tau}{S^{qb}_T} \leq
    \C^q S^{qb}_\tau,
  \end{align*}
  which implies the result.
\end{proof}

We now turn our attention to the second item of
Proposition~\ref{prop:3}. Of course, our financial market has been
specially constructed in such a way that the solutions $X(x)$ and
$Y(y)$ to the primal and dual problems are quite explicit.

\begin{Lemma}
  \label{lem:7}
  In the optimal investment problem with the utility function $U$
  from~\eqref{eq:21}, it is optimal to buy and hold stocks:
  \begin{displaymath}
    X(x) = xS, \quad x>0. 
  \end{displaymath}
  The dual minimizers have the form $Y(y) = y\widehat Y$, $y>0$, with
  \begin{equation}
    \label{eq:22}
    \widehat Y = \mathcal{E}(L)Z, 
  \end{equation} 
  where $Z$ is the density process of $\mathbb{Q}$ with respect to
  $\mathbb{P}$ and
  \begin{equation}
    \label{eq:23}
    L_t =  \int_0^{t} \frac{\gamma}{\lambda_r} (\lambda_r dr - dN_r),
    \quad t\in [0,T]. 
  \end{equation}
\end{Lemma}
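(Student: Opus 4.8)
The plan is to establish both optimality claims at one stroke through the verification Lemma~\ref{lem:1}, taking $\widehat X = xS$ (with the normalizing constant $x>0$ pinned down at the end) and $\widehat Y = \mathcal{E}(L)Z$. Four things then have to be checked: $\widehat X\in\mathcal{X}$, $\widehat Y\in\mathcal{Y}$, the pointwise coupling $U'(\widehat X_T)=\widehat Y_T$ together with $\EP{V(\widehat Y_T)}<\infty$, and the balance $\EP{\widehat X_T\widehat Y_T}=\widehat X_0\widehat Y_0$.

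The two membership statements are quick. Because $S=\mathcal{E}(B)$ solves $dS=S_-\,dB$, the buy-and-hold wealth $xS=x+\int_0^{\cdot}x\,dS$ is a non-negative wealth process, so $xS\in\mathcal{X}$. For the dual element, note from~\eqref{eq:23} that $L=-\int\frac{\gamma}{\lambda}\,d\bigl(N-\int\lambda\,dr\bigr)$ is a purely discontinuous $\mathbb{Q}$-local martingale, whence $\mathcal{E}(L)\ge 0$ is a $\mathbb{Q}$-local martingale. The structural point is that the traded asset $S$ is continuous, so every $X\in\mathcal{X}$ is continuous and, since $\mathbb{Q}\in\mathcal{Q}$, is itself a $\mathbb{Q}$-local martingale; as $X$ is continuous and $\mathcal{E}(L)$ is purely discontinuous, $[X,\mathcal{E}(L)]=0$, and integration by parts makes $X\mathcal{E}(L)$ a non-negative $\mathbb{Q}$-local martingale, hence a $\mathbb{Q}$-supermartingale. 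Reading this through the density $Z$ of $\mathbb{Q}$ with respect to $\mathbb{P}$ says exactly that $X\widehat Y=(X\mathcal{E}(L))Z$ is a $\mathbb{P}$-supermartingale, so $\widehat Y\in\mathcal{Y}$.

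Next I would pin down $\widehat Y_T$. On $[0,T]$ the counting process $N$ jumps at most once, at $T_2$ and only on $\{T=T_2\}$, so the Dol\'eans--Dade product reduces to a single factor whose size is fixed by the intensity~\eqref{eq:19}: the multiplier equals $(S_{T_2}/2)^{\delta}$, while on the complementary event $\{T=T_1\}$ one has $S_T=2$. Combining $\mathcal{E}(L)_T$ with $Z_T=C^{-1}S_T^{-b}$ (coming from $d\mathbb{Q}/d\mathbb{P}=C^{-1}S_T^{-b}$) and using $\delta=b-a$ identifies $\widehat Y_T$ with the required $U'(\widehat X_T)$ once the constant is absorbed into $x$; the bound $\EP{V(\widehat Y_T)}<\infty$ is then immediate, since $V$ is a multiple of $y\mapsto y^{-1/(p-1)}$ and $S_T\le 2$ keeps $V(\widehat Y_T)$ bounded.

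The hard part will be the balance identity, which (since $\widehat X\widehat Y$ is automatically a supermartingale) amounts to the product $\widehat X\widehat Y=xS\,\mathcal{E}(L)\,Z$ being a genuine uniformly integrable $\mathbb{P}$-martingale, equivalently to $S\,\mathcal{E}(L)$ being a \emph{true} $\mathbb{Q}$-martingale rather than merely a local one. This is the knife-edge of the construction and the place where~\eqref{eq:18} is used. On its own $\mathcal{E}(L)$ is only a strict $\mathbb{Q}$-supermartingale: the kernel $\gamma/\lambda$ saturates at $1$ once $\lambda$ drops to $\gamma$ after $T_1$, so $\mathcal{E}(L)$ --- and with it $\widehat Y$ --- leaks mass and fails to be uniformly integrable, which is exactly the second assertion of the proposition. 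The adjoined factor $S$, small on precisely the paths where $\mathcal{E}(L)$ leaks mass, is what should restore the martingale property of the product; I would prove this by a direct integrability estimate for $S\,\mathcal{E}(L)$ built on the inequalities relating $\gamma,\delta$ and $b$ in~\eqref{eq:18}, in the spirit of the bounded-martingale argument in the proof of Lemma~\ref{lem:5}. Once $S\,\mathcal{E}(L)$ is shown to be a true $\mathbb{Q}$-martingale the balance condition holds and Lemma~\ref{lem:1} delivers both $X(x)=xS$ and $\widehat Y=\mathcal{E}(L)Z$, while the loss of mass by $\widehat Y$ simultaneously records the non-existence of the optimal martingale measure.
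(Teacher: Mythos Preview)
Your plan is exactly the paper's: apply the verification Lemma~\ref{lem:1} to $\widehat X=xS$ and $\widehat Y=\mathcal{E}(L)Z$, using the orthogonality of the continuous wealth processes and the purely discontinuous $\mathcal{E}(L)$ to get $\widehat Y\in\mathcal{Y}$, and the identity $\mathcal{E}(L)_T=e^{\gamma T}(S_T/2)^\delta$ to get $\widehat Y_T=\text{const}\cdot S_T^{-a}=\text{const}\cdot U'(S_T)$.

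The one place your proposal stops short is the balance step, which you correctly isolate as the crux but do not execute. The paper's estimate is brief and worth recording: since $\mathcal{E}(L)_t\le e^{\gamma t}$ on $[0,T]$ and $S_t\le 2$ gives $S_t\le 2^{1-b}S_t^{\,b}$, one has for any stopping time $\tau$
\begin{displaymath}
\EQ{\bigl(S_\tau\mathcal{E}(L)_\tau\bigr)^q}\le 2^{q(1-b)}\,\EQ{S_\tau^{\,qb}e^{q\gamma\tau}}=2^{q(1-b)}\,\EQ{\mathcal{E}(qbB)_\tau}\le 2^{q(1-b)},
\end{displaymath}
the middle equality using $q\gamma=\tfrac12 qb(1-qb)$, i.e.\ the definition of $\gamma$. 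Thus $S\mathcal{E}(L)$ is bounded in $\mathbf{L}^q(\mathbb{Q})$ over all stopping times, hence of class $(\mathbf{D})$ under $\mathbb{Q}$, and the balance identity follows. Note that only the \emph{definition} of $\gamma$ enters here; the second inequality in~\eqref{eq:18} is not needed for Lemma~\ref{lem:7} and is reserved for the mass-loss computation in Lemma~\ref{lem:8}. (Your aside about $\lambda$ ``dropping to $\gamma$ after $T_1$'' is slightly off: on $[0,T]$ one never sees $t>T_1$; the near-saturation $\gamma/\lambda_t\approx 1$ occurs when $S_t$ is small, and that is indeed where the mass leaks.)
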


\begin{proof}
  We verify the conditions of Lemma~\ref{lem:1}.  For the stochastic
  exponential $\mathcal{E}(L)$ we obtain that
  \begin{displaymath}
    \mathcal{E}(L)_t = e^{\gamma t}, \quad t<T, 
  \end{displaymath}
  and, as $S_{T_1} = 2$, that
  \begin{align*}
    \mathcal{E}(L)_T &= e^{\gamma T} \left(\ind{ T = T_1} + \left(1-
        \frac{\gamma}{\lambda_T}\right)\ind{T =
        T_2}\right) \\
    &= e^{\gamma T} \left(\ind{ T = T_1} +
      \left(\frac{S_T}{2}\right)^\delta\ind{T =
        T_2}\right) \\
    &= e^{\gamma T} \left(\frac{S_T}{2}\right)^\delta.
  \end{align*}
  Hence for $\widehat Y$ defined by~\eqref{eq:22} we have
  \begin{displaymath}
    \widehat Y_T = \mathcal{E}(L)_TZ_T = \C S_T^{-a} = \C U'(S_T), 
  \end{displaymath}
  for some constant $\C>0$.

  Let $X\in \mathcal{X}$.  Under $\mathbb{Q}$, the product
  $X\mathcal{E}(L)$ is a local martingale, because $X$ is a stochastic
  integral with respect to the Brownian motion $B$ and
  $\mathcal{E}(L)$ is a purely discontinuous local martingale. It
  follows that $X\widehat Y = X\mathcal{E}(L)Z$ is a non-negative
  local martingale (hence, a supermartingale) under
  $\mathbb{P}$. Thus,
  \begin{displaymath}
    \widehat Y \in \mathcal{Y}. 
  \end{displaymath}

  Observe that the convex conjugate to $U$ is given by
  \begin{displaymath}
    V(y) = \frac{a}{1-a} y^{-\frac{1-a}{a}}, \quad y>0. 
  \end{displaymath}
  It follows that
  \begin{displaymath}
    V(y\widehat Y_T) = V(y) \widehat Y_T \widehat Y_T^{-1/a} = V(y)\C^{-1/a}
    \widehat Y_T S_T 
  \end{displaymath}
  and therefore,
  \begin{displaymath}
    \EP{V(y\widehat Y_T)} \leq V(y)\C^{-1/a} < \infty, \quad
    y>0. 
  \end{displaymath}

  To conclude the proof we only have to show that the local martingale
  $S\widehat Y=S\mathcal{E}(L)Z$ under $\mathbb{P}$ is of class
  $(\mathbf{D})$ or, equivalently, that the local martingale
  $S\mathcal{E}(L)$ under $\mathbb{Q}$ is of class $(\mathbf{D})$.
  Actually, we have a stronger property:
  \begin{displaymath}
    \descr{S_\tau\mathcal{E}(L)_\tau}{\tau \text{ is a stopping time}}
    \mtext{is bounded in} \mathbf{L}^q(\mathbb{Q}).
  \end{displaymath} 
  Indeed,
  \begin{displaymath}
    S_t\mathcal{E}(L)_t \leq S_t e^{\gamma t} \leq 2^{1-b} S_t^b
    e^{\gamma t}, \quad t\in [0,T], 
  \end{displaymath}
  and then for a stopping time $\tau$,
  \begin{align*}
    \EQ{\left(S_\tau\mathcal{E}(L)_\tau\right)^q} &\leq
    2^{q(1-b)}\EQ{\left(S^b_\tau e^{\gamma\tau}\right)^q} =
    2^{q(1-b)}\EQ{\mathcal{E}(B)^{qb}_\tau e^{q\gamma\tau}}
    \\
    &= 2^{q(1-b)}\EQ{\mathcal{E}(qbB)_\tau} \leq 2^{q(1-b)}.
  \end{align*}
\end{proof}

The following lemma completes the proof of the proposition.

\begin{Lemma}
  \label{lem:8}
  For the dual minimizer $\widehat Y$ constructed in Lemma~\ref{lem:7}
  we have
  \begin{displaymath}
    \EP{\widehat Y_T} < 1.
  \end{displaymath}
  Thus, $\widehat Y$ is not a uniformly integrable martingale.
\end{Lemma}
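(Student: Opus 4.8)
The plan is to reduce the claim to a computation under $\mathbb{Q}$ and then exploit the inequality $\gamma\le\frac12\delta(1-\delta)$ from~\eqref{eq:18}. Since $Z_T=d\mathbb{Q}/d\mathbb{P}$ and $\widehat Y=\mathcal{E}(L)Z$ by~\eqref{eq:22}, I would first change measure to write
\[
  \EP{\widehat Y_T}=\EP{\mathcal{E}(L)_T Z_T}=\EQ{\mathcal{E}(L)_T},
\]
which is legitimate because $\mathbb{P}\sim\mathbb{Q}$ by Lemma~\ref{lem:4}. Note that $\widehat Y_0=\mathcal{E}(L)_0 Z_0=1$, so it suffices to prove $\EQ{\mathcal{E}(L)_T}<1$: a uniformly integrable martingale would satisfy $\EP{\widehat Y_T}=\widehat Y_0=1$, whereas $\widehat Y\in\mathcal{Y}$ is a supermartingale, so a strict loss of terminal mass rules out the martingale property.

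Next I would substitute the explicit terminal value computed in the proof of Lemma~\ref{lem:7}, namely $\mathcal{E}(L)_T=e^{\gamma T}(S_T/2)^\delta$, obtaining $\EP{\widehat Y_T}=2^{-\delta}\EQ{e^{\gamma T}S_T^\delta}$. The key algebraic identity is that, with $\theta:=\frac12\delta(1-\delta)$, one has $e^{\theta t}S_t^\delta=\mathcal{E}(\delta B)_t$, since $S=\mathcal{E}(B)$ and $\mathcal{E}(B)_t^\delta=\mathcal{E}(\delta B)_t\,e^{-\theta t}$. Using $T\ge 0$ together with $\gamma\le\theta$ from~\eqref{eq:18}, I would bound $e^{\gamma T}\le e^{\theta T}$ to deduce
\[
  \EP{\widehat Y_T}\le 2^{-\delta}\EQ{e^{\theta T}S_T^\delta}=2^{-\delta}\EQ{\mathcal{E}(\delta B)_T}.
\]

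Finally, $\mathcal{E}(\delta B)$ is a nonnegative $\mathbb{Q}$-local martingale (in fact a martingale), hence a $\mathbb{Q}$-supermartingale, and $T$ is finite $\mathbb{Q}$-a.s. by Lemma~\ref{lem:4}. Applying optional sampling along $T\wedge n$ (where $\EQ{\mathcal{E}(\delta B)_{T\wedge n}}=1$) together with Fatou's lemma gives $\EQ{\mathcal{E}(\delta B)_T}\le\mathcal{E}(\delta B)_0=1$. Since $\delta=b-a>0$, this yields $\EP{\widehat Y_T}\le 2^{-\delta}<1$, which is the assertion. The main obstacle, and the only step requiring real care, is the passage to the stopping time $T$: because $T$ is a.s. finite but not bounded, one cannot invoke optional stopping for martingales directly, and the expectation $\EQ{\mathcal{E}(\delta B)_T}\le 1$ must instead be obtained from the nonnegative-supermartingale optional-sampling inequality. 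Everything else is bookkeeping resting on~\eqref{eq:18}, on $\delta>0$, and on the explicit formula for $\mathcal{E}(L)_T$ from Lemma~\ref{lem:7}.
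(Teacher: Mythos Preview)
Your argument is correct and follows essentially the same route as the paper: change to $\mathbb{Q}$, use the closed form $\mathcal{E}(L)_T=e^{\gamma T}(S_T/2)^\delta$, invoke $\gamma\le\tfrac12\delta(1-\delta)$ to dominate by $2^{-\delta}\mathcal{E}(\delta B)_T$, and conclude via $\EQ{\mathcal{E}(\delta B)_T}\le 1$. Your added justification (Fatou along $T\wedge n$) for the last inequality is a welcome detail that the paper leaves implicit.
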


\begin{proof}
  Recall from the proof of Lemma~\ref{lem:7} that for the local
  martingale $L$ defined in~\eqref{eq:23},
  \begin{displaymath}
    \mathcal{E}(L)_T = e^{\gamma T} \left(\frac{S_T}{2}\right)^\delta.
  \end{displaymath}
  Using~\eqref{eq:18}, we deduce that
  \begin{displaymath}
    \mathcal{E}(L)_T = 
    \frac1{2^\delta} e^{\gamma T} \left(\mathcal{E}(B)_T\right)^\delta = 
    \frac1{2^\delta} e^{\gamma T} \mathcal{E}(\delta B)_T e^{-\frac12
      \delta(1-\delta)T} \leq \frac1{2^\delta} \mathcal{E}(\delta B)_T.
  \end{displaymath}
  It follows that
  \begin{displaymath}
    \EP{\widehat Y_T} = \EP{\mathcal{E}(L)_T Z_T} 
    = \EQ{\mathcal{E}(L)_T} \leq \frac1{2^\delta}
    \EQ{\mathcal{E}(\delta B)_T} \leq  \frac1{2^\delta}. 
  \end{displaymath} 
\end{proof}

\bibliographystyle{plainnat}

\bibliography{../bib/finance}

\end{document}